 \renewcommand\@biblabel[1]{#1} 
\newcommand\eat[1]{}
\newsavebox\CBox
\newcommand\hcancel[2][0.5pt]{%
  \ifmmode\sbox\CBox{$#2$}\else\sbox\CBox{#2}\fi%
  \makebox[0pt][l]{\usebox\CBox}%
  \rule[0.5\ht\CBox-#1/2]{\wd\CBox}{#1}}
\tikzset{
  jumpdot/.style={mark=*,solid},
  excl/.append style={jumpdot,fill=white},
  incl/.append style={jumpdot,fill=black},
  rexcl/.append style={jumpdot,color=red,fill=white},
  rincl/.append style={jumpdot,fill=black,color=red},
}
\newcommand{\stkout}[1]{\ifmmode\text{\sout{\ensuremath{#1}}}\else\sout{#1}\fi}
	\newcommand{\pref}{\succeq\xspace}
\definecolor{gray(x11gray)}{rgb}{0.75, 0.75, 0.75}
\def\colorModel{rgb} 
\newcommand\ColCell[1]{
  \pgfmathparse{#1<0.5?1:0}  
    \ifnum\pgfmathresult=0\relax\color{white}\fi
  \pgfmathsetmacro\compA{1-#1}      
  \pgfmathsetmacro\compB{1-#1/1.5} 
  \pgfmathsetmacro\compC{1}      
  \edef\x{\noexpand\centering\noexpand\cellcolor[\colorModel]{\compA,\compB,\compC}}\x #1
  } 
\newcolumntype{E}{>{\collectcell\ColCell}m{0.5cm}<{\endcollectcell}}  
\begin{document}
\title{Obvious Manipulability of Voting Rules}
%
%


\author{Haris Aziz\inst{1} \and
Alexander Lam\inst{1}
}
\authorrunning{H. Aziz and A. Lam}

\institute{UNSW Sydney\\
\email{\{haris.aziz,alexander.lam1\}@unsw.edu.au}
}

\maketitle              
\begin{abstract}
  The Gibbard-Satterthwaite theorem states that no unanimous and non-dictatorial voting rule is strategyproof. We revisit voting rules and consider a weaker notion of strategyproofness called not obvious manipulability that was proposed by Troyan and Morrill (2020). We identify several classes of voting rules
that satisfy this notion. We also show that several voting rules including $k$-approval fail to satisfy this property. We characterize conditions under which voting rules are obviously manipulable. 
One of our insights is that certain rules are obviously manipulable when the number of alternatives is relatively large compared to the number of voters. In contrast to the Gibbard-Satterthwaite theorem, many of the rules we examined are not obviously manipulable. This reflects the relatively easier satisfiability of the notion and the zero information assumption of not obvious manipulability, as opposed to the perfect information assumption of strategyproofness. We also present algorithmic results for computing obvious manipulations and report on experiments.

\keywords{Social choice \and  voting \and manipulation \and strategyproofness.}
\end{abstract}

\section{Introduction}

Throughout history, voting has been used as a means of making public decisions based on the citizens' preferences. 
The ancient Greeks would give a show of hands to disclose their most preferred public official, and the winner of the election was chosen as the official with the most first preferences \citep{Chis11a}; such a voting system is called the \textit{plurality vote}. Many other voting systems have been developed over time, such as the Borda Count, developed by Jean-Charles de Borda in 1770. The Borda Count gives each candidate a score based on their position in the voters' preference orders. This system was opposed by Marquis de Condorcet, who instead preferred the Condorcet method, which elects the candidate that wins the majority of pairwise head-to-head elections against the other candidates \citep{Blac58a}. However, voting systems are not just used in politics; voting theory is frequently used and studied in artificial intelligence to aggregate the preferences of multiple agents into a single decision.

The studies of electoral systems in social choice theory have been wrought with negative results. 
Arrow's impossibility theorem \citep{Arro50a} showed that there exists no voting system with three reasonable requirements. In a similar vein, the Gibbard-Sattherthwaite theorem \citep{Gibb73a,Satt75a} states that when there are at least three alternatives, every unanimous voting rule is either dictatorial, meaning only one voter's preferences are taken into account, or prone to manipulative voting, meaning a voter can give an untruthful ballot to gain a more preferred outcome.

Such strategic behaviour is a commonly studied problem in mechanism design and social choice, as many mechanisms sacrifice efficiency or fairness to ensure strategyproofness. 
The original notion of strategyproofness fails to explain the variation we observe in voters' tendency to strategically vote in different electoral systems. This has motivated research toward alternative concepts  of strategyproofness that may be able to capture such variations.
One such notion is \textit{not obvious manipulability}, recently theorized by \citet{TrMo20a}. Whilst strategyproofness assumes agents have complete information over other agent preferences and the mechanism operation, not obvious manipulability assumes agents are `cognitively limited' and lack such information. As such, they are only aware of the possible range of outcomes that can result from each mechanism interaction. Put simply, a mechanism satisfies \textit{not obvious manipulability (NOM)} if no agent can improve its best case or worst case outcome under any manipulation. A mechanism is \textit{obviously manipulable (OM)} if either an agent's best case or worst case outcome can be improved by some untruthful interaction.

The assumptions made for \textit{not obvious manipulability} are suitable when applied to voting rules, as ballots are commonly hidden from the voters, restricting their ability to compute a desirable manipulation. In this paper, we explore which voting rules are obviously manipulable, and if so, what the conditions are for obvious manipulability.


\paragraph{Contributions}

Our main contribution is to apply the concept of obvious manipulations to the case of voting rules for the first time. We study which voting rules are obviously manipulable, and what conditions are required for obvious manipulability. Whilst many classes of voting rules including Condorcet extensions and strict positional scoring rules with weakly diminishing differences are not obviously manipulable, we show that certain voting rules, including $k$-approval, are obviously manipulable. We also characterize the conditions under which positional scoring rules are obviously manipulable in the best case. For the class of $k$-approval voting rules, we characterize the conditions under which the rules are obviously manipulable. Many of our results apply to large classes of voting rules including positional scoring rules or Condorcet extensions. 
Table~\ref{table:summary} summarizes several of our results. 
\begin{table}[h!]
	\centering
\scalebox{0.8}{
\begin{tabular}{ll}
\toprule
\textbf{NOM}&\textbf{OM}\\
\midrule
\textbf{Does not admit a voter with veto power} & \\
\midrule
\textbf{k-Approval} ($n> \frac{m-2}{m-k}$)  & \textbf{k-Approval} ($n\leq \frac{m-2}{m-k}$)\\
Plurality & \\
\midrule
\textbf{Almost-unanimous} & \\
Condorcet-extension & \\
STV & \\
Plurality with runoff & \\
\midrule
Positional scoring rule ($n>\frac{s_1}{s_1-s_2}+1$) ~~~~~&Positional scoring rule that \\

& admits a voter with veto power (existence)\\
\midrule
Positional scoring rule  with weakly&\\
{diminishing differences} & \\
Borda rule & \\
\bottomrule
\end{tabular}
}
\caption{List of rules and conditions for voting rules to be NOM or OM.}
\label{table:summary}
\end{table}

One of our insights is that certain rules are obviously manipulable when the number of alternatives is relatively large compared to the number of voters. We also look at the problem of checking whether a particular instance of a voting problem admits an obvious manipulation. For the class of positional scoring rules, we provide a general polynomial-time reduction to the well-studied \emph{unweighted coalitional manipulation problem}. As a corollary, we show that the problem of checking the existence of an obvious manipulation is polynomial-time solvable for the $k$-approval rule. Finally, we report on experimental results on the fraction of instances that admit obvious manipulations for the $k$-approval rule.


\section{Related Work}

Our paper belongs to the rich stream of work in social choice on the manipulability of voting rules. The reader is referred to the book by \citet{Tayl05a} that surveys this rich field. A comparison of the susceptibility of voting rules to manipulation has a long history in social choice. For example, one particular approach is to count the relative number of preference profiles under which voting rules are manipulable (see, e.g., \citep{FLS02a}). Another approach is analyzing the maximum amount of expected utility an agent can gain by reporting untruthfully~\citep{Carr11a}.

Our work revolves around the concept of obvious manipulations, which was proposed by \citet{TrMo20a}. This concept was inspired by a paper on `obviously strategyproof mechanisms' by \citet{Li17a}. The latter paper describes the cognitively-limited agent that is only aware of the range of possible outcomes ranging from each report. In the paper, Li then proposes the characterization of `obvious strategyproofness', a strengthening of strategyproofness. A mechanism is defined as obviously strategyproof if each agent's worst case outcome under a truthful report is strictly better than their best case outcome under any untruthful report.
\citet{TrMo20a} studied obvious manipulations in the context of matching problems. In particular, they showed that whereas the Boston mechanism is obviously manipulable, many stable matching mechanisms (including those that are not strategyproof) are not obviously manipulable.

Other, weaker notions of strategyproofness specific to voting rules have been proposed in the literature. \citet{SlWh08a,SlWh14a} considered \emph{safe strategic voting} to represent the coalitional manipulation of scoring rules. Assuming every member of the coalition reports the same ballot, a manipulation is a \emph{safe strategic vote} if it guarantees an outcome which is weakly preferred over truth-telling. {Another notion has also been proposed by \citet{CWX11a}, who state that a ballot \emph{dominates} another ballot if it guarantees a weakly more preferred outcome. The authors define a voting rule as being \emph{immune to dominating manipulations} if there are no ballots that dominate any voter's true preferences, and classify the immunity of certain rules under varying levels of information known by the manipulator. In particular relevance to our paper, they find that certain voting rules such as Condorcet-consistent rules and the Borda count are immune to dominating manipulations under zero information. We remark that immunity to dominating manipulations under zero information is a weaker notion than not obvious manipulability, and thus our work investigates a stronger notion defining a voting rule's resistance to manipulation than some existing notions.} For further discussion on the strategic aspects of voting with partial information, the reader is referred to Chapter 6 and 8 of the book by \citet{Meir18a}, {where similar concepts such as local dominance are discussed}.


In many elections, voters often lack information of other voters' preferences. This has prompted a probabilistic perspective into the manipulability of voting rules, often assuming a uniform distribution over each preference ordering. In 1985, Nitzan showed that in point scoring rules, a manipulation is more likely to succeed as the number of outcomes increases, and the number of voters decreases \citep{Nitz85a}. A similar probabilistic perspective was used by \citet{WiRe10a}. 
Computer scientists have also extensively researched the computational complexity of calculating a manipulative ballot; as the number of voters and outcomes becomes large, it can be computationally infeasible to compute a manipulation if the problem is intractable (see, e.g. \citep{CoWa16a,FaPr10a}). 

\section{Preliminaries}
We consider the standard social choice voting setting $(N,O,\succ)$ that involves a finite set 
$N=\{1,2,\dots,n\}$ of $n$ voters and a finite set $O=\{o_1,o_2,\dots,o_m\}$ of $m$ outcomes. We also assume that $n\geq 3$ and $m\geq 3$. Each voter $i$ has a transitive, complete and reflexive preference ordering $\succ_i$ over the set of outcomes $O$. We denote the preference profile of each voter $i\in N$ as $\succ=(\succ_1,\ldots, \succ_n)$, and use $\mathcal{L}(O)^n$ to denote the set of all such profiles for a given $n$. For a given voter $i\in N$, we use $\succ_{-i}=(\succ_1,\dots,\succ_{i-1},\succ_{i+1},\dots,\succ_n)$ to denote the preference profile of the voters in $N\backslash \{i\}$.
A voting rule $f{: \mathcal{L}(O)^n\rightarrow O}$ is a function that takes as input the preference profile and returns an outcome from $O$.




An outcome $o\in O$ is called a \emph{possible outcome} under a voting rule $f$ if there exists some preference profile $\succ$ such that $f(\succ)=o$.
Since we are considering voting rules that return a single outcome, we will impose tie-breaking over social choice correspondences (voting rules that return more than one outcome) to return a single outcome. Unless specified otherwise, we will assume a fixed tie-break ordering over the outcomes. 

\begin{definition}
	A voting rule $f$ is \emph{manipulable} if there exists some voter $i \in N$, two preference relations  $\succ_i, \succ'_i$ of voter $i$, and a preference profile $\succ_{-i}$ of other voters such that $f(\succ_i',\succ_{-i}) \succ_i f(\succ_i,\succ_{-i})$. Such a manipulation is defined as a \emph{profitable manipulation} for voter $i$. 
A voting rule is \emph{strategyproof (SP)} if it is not manipulable. 
\end{definition}

Under voting rule $f$, a given set of outcomes and a fixed number of voters, we denote by $B_{\succ_i}(\succ_i',f)$ the best possible outcome (under $i$'s preference $\succ_i$) when she reports $\succ_i'$, over all possible preferences of the other voters. We also denote by $W_{\succ_i}(\succ_i', f)$ the worst possible outcome (under $i$'s preference $\succ_i$) when she reports $\succ_i'$, over all possible preferences of the other voters. We now present the central concept used in the paper, which has been adapted from the paper by Troyan and Morrill \citep{TrMo20a} to the field of voting.

\begin{definition}
A voting rule $f$ is {\it not obviously manipulable} (NOM) if for every voter $i$ with truthful preference $\succ_i$ and every profitable manipulation $\succ_i'$, the following two conditions hold:
\begin{eqnarray}
\label{eq:nom-inf}
  W_{\succ_i}(\succ_i,f)\pref_i W_{\succ_i}(\succ_i',f)
\\
\label{eq:nom-sup}
 B_{\succ_i}(\succ_i,f)\pref_i B_{\succ_i}(\succ_i',f). 
\end{eqnarray}
\end{definition}
If either condition does not hold, then we say the voting rule is \textit{obviously manipulable}. Specifically, if $(1)$ does not hold, then we say the voting rule is \textit{worst case obviously manipulable}. Similarly, if $(2)$ does not hold, then we say it is \textit{best case obviously manipulable}.

%

\section{Sufficient Conditions for not being Obviously Manipulable}
In this section, we identify certain conditions that imply not obvious manipulability when satisfied by voting rules.
\begin{definition}
For a given voting rule $f$ and a fixed number of voters $n$ and outcomes $m$, a voter $i$ has \textit{veto power} if there exists a possible outcome $o\in O$ and report $\succ_i$ {such that $f(\succ_i,\succ_{-i})\neq o$ for all $\succ_{-i}$}. 
\end{definition}

Our first result is a sufficient condition for a voting rule being NOM.

\begin{lemma}\label{prop:noveto}
If a voting rule is obviously manipulable, then it must admit a non-dictatorial {voter with veto power}.
	\end{lemma}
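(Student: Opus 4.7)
The plan is to prove the contrapositive: if a voting rule does not admit any non-dictatorial voter with veto power, then it is NOM. Equivalently, assuming every voter is either a dictator or has no veto power, I would show both conditions (1) and (2) of NOM hold for every voter $i$ and every profitable manipulation $\succ_i'$.

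I would proceed by a case split on a fixed voter $i$. In the first case, suppose $i$ has no veto power. By the definition of veto power, this means that for every report $\succ_i'$ of voter $i$ and every possible outcome $o \in O$ (i.e., every outcome in the range of $f$), there exists some completion $\succ_{-i}$ of the other voters' preferences such that $f(\succ_i', \succ_{-i}) = o$. The key consequence is that the set of outcomes achievable as $\succ_{-i}$ varies is exactly the set of all possible outcomes of $f$, regardless of $i$'s report. Therefore
\[
B_{\succ_i}(\succ_i, f) = B_{\succ_i}(\succ_i', f) \quad\text{and}\quad W_{\succ_i}(\succ_i, f) = W_{\succ_i}(\succ_i', f),
\]
so both NOM conditions trivially hold for $i$.

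In the second case, suppose $i$ is a dictator, i.e., $f(\succ_i,\succ_{-i})$ always returns $i$'s top-ranked outcome under $\succ_i$. Then any deviation $\succ_i'$ either keeps the same top element (and so cannot change the outcome in $i$'s favour) or produces an outcome weakly less preferred than $i$'s true top. In particular, there can be no profitable manipulation for $i$, so the NOM conditions are vacuously satisfied for $i$.

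Combining both cases over all voters shows the rule is NOM, proving the contrapositive. The main subtlety I anticipate is getting the interaction between the definitions of \emph{possible outcome} and \emph{veto power} right: one must be careful that ``no veto power'' yields equality of both the best- and worst-case outcomes across all reports of $i$, and that the equality is taken over the range of $f$ rather than all of $O$. Beyond this, the argument is a direct unpacking of the definitions.
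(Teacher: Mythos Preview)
Your proposal is correct and follows essentially the same approach as the paper's proof: both argue the contrapositive, both handle the dictator case by noting there is no profitable manipulation (the paper simply says the rule is then strategyproof), and both handle the no-veto-power case by observing that the set of outcomes reachable by varying $\succ_{-i}$ coincides with the range of $f$ regardless of $i$'s report. The only cosmetic difference is that the paper splits cases at the level of the rule (dictatorial vs.\ not) whereas you split per voter, and you state the slightly stronger conclusion $B_{\succ_i}(\succ_i,f)=B_{\succ_i}(\succ_i',f)$ and $W_{\succ_i}(\succ_i,f)=W_{\succ_i}(\succ_i',f)$ rather than just the required weak inequalities.
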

		
%
%

{However,} existence of a voter with veto power does not imply obvious manipulability. We will illustrate this later in the paper.

\begin{definition}
A voting rule $f$ is \emph{almost-unanimous} if it returns an outcome $o$ when $o$ is the most preferred outcome for {at least $n-1$ voters}. Almost-unanimity implies unanimity.
\end{definition}

			\begin{theorem}\label{prop:almostunan}
				For $n\geq 3$, no almost-unanimous voting rule is obviously manipulable.
				\end{theorem}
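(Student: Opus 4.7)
The plan is to show that for an almost-unanimous rule, both the best-case and worst-case outcomes for any voter $i$ are pinned down to the top and bottom of $i$'s true preference regardless of what $i$ reports, so both NOM inequalities automatically hold with equality. In particular, the rule cannot be obviously manipulable because manipulation can never change $i$'s best or worst case.

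Fix a voter $i$ with true preference $\succ_i$, let $a$ denote $i$'s most preferred outcome and $z$ denote $i$'s least preferred outcome. Since $n \geq 3$, there are $n-1 \geq 2$ other voters. The key observation is: for any report $\succ_i''$ that $i$ submits, consider the profile of $\succ_{-i}$ in which all $n-1$ other voters rank $a$ first. Then at least $n-1$ voters rank $a$ first, so by almost-unanimity $f(\succ_i'', \succ_{-i}) = a$, and this is independent of what $\succ_i''$ looks like. Since $a$ is the top of $\succ_i$, no outcome can be strictly preferred, so $B_{\succ_i}(\succ_i'', f) = a$ for every report $\succ_i''$. Symmetrically, by taking $\succ_{-i}$ to be any profile in which all $n-1$ other voters rank $z$ first, almost-unanimity forces $f(\succ_i'', \succ_{-i}) = z$, and since $z$ is the bottom of $\succ_i$, we get $W_{\succ_i}(\succ_i'', f) = z$ for every report $\succ_i''$.

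Applying these two equalities to the truthful report $\succ_i$ and to any alternative report $\succ_i'$ gives
\[
B_{\succ_i}(\succ_i, f) = a = B_{\succ_i}(\succ_i', f), \qquad W_{\succ_i}(\succ_i, f) = z = W_{\succ_i}(\succ_i', f),
\]
so both NOM conditions hold with equality for every voter $i$ and every alternative report $\succ_i'$ (profitable or not). Hence $f$ is not obviously manipulable.

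There is essentially no substantial obstacle in this argument; the only subtle point is recognizing that almost-unanimity applies uniformly across all of $i$'s reports because the threshold of $n-1$ top-ranking voters can be reached entirely by the other voters. The assumption $n \geq 3$ is used exactly to guarantee $n-1 \geq 2$ so that the coordinated profiles on $a$ and on $z$ are nontrivial and consistent with the definition.
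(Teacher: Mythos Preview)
Your proof is correct. It differs from the paper's route, which factors through Lemma~\ref{prop:noveto}: the paper observes that an almost-unanimous rule cannot admit a (non-dictatorial) voter with veto power, and then invokes the lemma stating that obvious manipulability requires such a voter. You instead bypass the veto-power abstraction and directly exhibit, for any report $\succ_i''$, profiles of the other $n-1$ voters that force $i$'s top outcome $a$ and bottom outcome $z$ via almost-unanimity. Both arguments rest on the same underlying observation---that the $n-1$ other voters alone can trigger the almost-unanimity clause for any target outcome---but your version is self-contained, while the paper's decomposition isolates Lemma~\ref{prop:noveto} as a reusable sufficient condition that it later applies elsewhere.

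One small remark: your explanation of why $n\geq 3$ is needed is slightly off. Your construction works mechanically whenever $n-1\geq 1$; the genuine role of $n\geq 3$ is that for $n=2$ the almost-unanimity condition becomes self-contradictory (two voters with distinct top choices would each trigger it), so no such rule exists. This does not affect the validity of your argument.
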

				\begin{proof}
					Note that an almost-unanimous voting rule is not dictatorial. {By definition}, a rule that is almost-unanimous {cannot} admit a voter with veto power. Hence it follows from Lemma~\ref{prop:noveto} that for $n\geq 3$, no almost-unanimous voting rule is obviously manipulable. \qed
\end{proof}

		\begin{corollary}\label{corol:NOM}
			Any majoritarian (Condorcet extension rule) is NOM. 
			\end{corollary}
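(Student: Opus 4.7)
The plan is to reduce this to Theorem~\ref{prop:almostunan} by showing that every Condorcet extension rule is almost-unanimous. Once almost-unanimity is established, the corollary follows immediately since $n \geq 3$ is assumed throughout the paper.

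To verify almost-unanimity, I would take any outcome $o \in O$ and any preference profile in which at least $n-1$ voters rank $o$ first. The key step is a simple majority-counting argument: in any pairwise head-to-head comparison between $o$ and some other candidate $o' \neq o$, outcome $o$ receives at least $n-1$ votes while $o'$ receives at most $1$ vote. Since $n \geq 3$, we have $n - 1 \geq 2 > 1$, so $o$ strictly defeats $o'$ pairwise. As $o'$ was arbitrary, $o$ is the (unique) Condorcet winner of this profile, and by definition a Condorcet extension rule must return $o$. Hence the rule satisfies the almost-unanimity condition.

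Applying Theorem~\ref{prop:almostunan} then yields that the rule is not obviously manipulable, which is exactly the content of the corollary. I do not anticipate any real obstacle: the only place to be careful is the threshold $n \geq 3$, which is needed both for the pairwise majority argument above and for the hypothesis of Theorem~\ref{prop:almostunan}. An equally short alternative route would bypass almost-unanimity and apply Lemma~\ref{prop:noveto} directly, observing that the same $(n-1)$-voters-rank-$o$-first construction shows no voter $i$ can have veto power over any outcome $o$ (whatever $i$ reports, the others can force $o$ as the Condorcet winner); but the almost-unanimity route is cleaner because Theorem~\ref{prop:almostunan} already packages the reasoning.
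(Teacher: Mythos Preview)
Your proposal is correct and follows exactly the paper's approach: the paper's proof simply states that any majoritarian rule is almost-unanimous and then invokes Theorem~\ref{prop:almostunan}. You supply the (straightforward) pairwise-majority justification for almost-unanimity that the paper leaves implicit, but the route is the same.
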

%

Similarly, Theorem~\ref{prop:almostunan} applies to several voting rules including STV~\citep{Tide95a} and Plurality with runoff~\citep{Niou01a} that are almost-unanimous.
			
\begin{corollary}
STV and Plurality with runoff are NOM.
\end{corollary}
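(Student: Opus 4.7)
The plan is to invoke Theorem~\ref{prop:almostunan} directly, so the only task is to verify that STV and Plurality with runoff are both almost-unanimous. Once that hypothesis is confirmed, the corollary is immediate: an almost-unanimous rule (for $n \geq 3$) is NOM by the theorem.

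For STV, I would consider an arbitrary profile in which some outcome $o$ is the top choice of at least $n-1$ voters. Then in the initial round $o$ holds at least $n-1$ first-place votes while every other surviving candidate holds at most one. Since $n \geq 3$, we have $n-1 > n/2$, so $o$ already commands a strict majority of first preferences and is returned by STV without any redistribution being needed. (If one uses a version of STV that does not short-circuit on a majority, the same conclusion follows by a trivial induction: $o$ has strictly more first-place votes than any other surviving candidate in every round, hence is never eliminated, and once all other candidates are eliminated $o$ is declared the winner.) For Plurality with runoff, the same majority argument works: $o$ receives at least $n-1$ plurality votes and therefore advances to the runoff; in the runoff against any opponent $c$, the $n-1$ voters who rank $o$ first all prefer $o$ to $c$, so $o$ wins the runoff with at least $n-1 > n/2$ votes.

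Because the inequality $n-1 > n/2$ is strict for $n \geq 3$, tie-breaking conventions play no role in either argument, so almost-unanimity holds regardless of the fixed tie-break. I do not anticipate a genuine obstacle here; the whole content of the corollary is this routine verification that both rules satisfy the hypothesis of Theorem~\ref{prop:almostunan}, after which NOM follows with no further work.
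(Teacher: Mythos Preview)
Your proposal is correct and matches the paper's approach exactly: the paper simply observes that STV and Plurality with runoff are almost-unanimous and then invokes Theorem~\ref{prop:almostunan}. Your verification of almost-unanimity is actually more detailed than what the paper provides, but the route is identical.
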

We have shown that many voting rules are not obviously manipulable, so we question whether there are any obviously manipulable voting rules. We next investigate positional scoring rules.


\section{Positional Scoring Rules}
In this section, we consider positional scoring rules, a major class of voting rules which assigns points to candidates based on voter preferences and chooses the candidate with the highest score. A formal definition of a positional scoring rule is given below.
\begin{definition}
A \textit{positional scoring rule} assigns a score to each outcome using the score vector $w=(s_1,s_2,\dots,s_m)$, where $s_i\geq s_{i+1} \forall i\in \{1,2,\dots,m-1\}$ and $\exists i\in \{1,2,\dots,m-1\}: s_i>s_{i+1}$. Each voter gives $s_i$ points to their $i$th most preferred candidate, and the score of a candidate is the total number of points given by all voters.
The candidate with the highest number of points is returned by the rule.
\end{definition}
{Note that this positional scoring rule definition rules out unreasonable, pathological scoring vectors such as $(1,2,3)$.}
Several well-known rules fall in the class of positional scoring rules. For example if $s_i=m-i$ for all $i\in [m]$, the rule is the Borda voting rule. If $s_1=1$ and $s_i=0$ for all $i>1$, the rule is plurality. If $s_m=0$ and $s_i=1$ for all $i<m$, the rule is anti-plurality. 

Next, we identify a sufficient condition for a positional scoring rule to be NOM.

			\begin{theorem}
			A positional scoring rule is NOM if $n>\frac{s_1}{(s_1-s_2)}+1$.
			\end{theorem}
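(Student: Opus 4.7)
The plan is to leverage Lemma~\ref{prop:noveto}: since any obviously manipulable rule admits a non-dictatorial voter with veto power, it suffices to show that under the hypothesis $n > \frac{s_1}{s_1-s_2}+1$ no voter has veto power at all; NOM then follows immediately (dictatorship is precluded en route, since a dictator would trivially have veto power against every non-top alternative).

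To establish the no-veto claim, I will fix an arbitrary voter $i$, an arbitrary report $\succ_i$, and an arbitrary outcome $o \in O$, and exhibit a profile $\succ_{-i}$ of the remaining $n-1$ voters that forces $o$ to be elected. The natural construction is to let every voter $j \neq i$ place $o$ first and rank the remaining $m-1$ alternatives in any fixed order below $o$. With this choice, $o$ collects exactly $(n-1)s_1$ points from the other voters, while every competitor $o' \neq o$ sits in position $2$ or worse in each of those $n-1$ ballots and so collects at most $(n-1)s_2$ from them. From voter $i$'s report, any candidate collects at most $s_1$ points.

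The remaining step is a one-line arithmetic check. The total score of $o$ is at least $(n-1)s_1$, while the total score of any competitor is at most $(n-1)s_2 + s_1$. The hypothesis $n > \frac{s_1}{s_1-s_2}+1$ rearranges to $(n-1)(s_1-s_2) > s_1$, which is $(n-1)s_1 > (n-1)s_2 + s_1$. Hence $o$ strictly outscores every other candidate regardless of the report $\succ_i$, so $o$ wins under any fixed tie-break. Since $i$, $\succ_i$, and $o$ were arbitrary, no voter has veto power, and Lemma~\ref{prop:noveto} closes the argument.

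The main obstacle I anticipate is the interaction with the fixed tie-break: if the hypothesis were only a weak inequality, an adversarial tie-break could let a competitor tied with $o$ win instead, and the construction would no longer force $o$. The strictness of the assumed bound is precisely what guarantees $o$ wins outright rather than tying, which is why the statement uses ``$>$'' rather than ``$\geq$''.
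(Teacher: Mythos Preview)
Your proof is correct and follows essentially the same approach as the paper: both show that when $n-1$ voters rank an outcome $o$ first, $o$ strictly outscores every competitor, via the identical arithmetic $(n-1)s_1 > (n-1)s_2 + s_1 \iff n > \frac{s_1}{s_1-s_2}+1$. The only cosmetic difference is that the paper names this property \emph{almost-unanimity} and invokes Theorem~\ref{prop:almostunan} (itself a one-line application of Lemma~\ref{prop:noveto}), whereas you appeal to Lemma~\ref{prop:noveto} directly.
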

			\begin{proof}
			It is sufficient to show that for $n>\frac{s_1}{(s_1-s_2)}+1$, the rule is almost-unanimous. Any outcome $a$ that is the most preferred by $n-1$ voters has a score of at least $(s_1)(n-1)$. We show that this score is greater than the score of any other candidate. 
			The maximum score any other outcome $b$ can get is by being in the first position of one voter and second position of all other voters so its score is $(s_2)(n-1)+s_1$. The score of $a$ is greater than the maximum score of $b$ if and only if 
			\begin{align*}
			&(s_1)(n-1)> (s_2)(n-1)+s_1\\
			\iff &(n-1)(s_1-s_2)>s_1 \text{\qquad}\\
			\iff&n> \frac{s_1}{(s_1-s_2)}+1.
			\end{align*} \qed
			\end{proof}

{This result suggests that many positional scoring rules are NOM when there are sufficiently many voters, and that scenarios with few voters may be required for a positional scoring rule to be obviously manipulable.}

\subsection{k-Approval}
The $k$-approval rule is a subclass of positional scoring rules that lets voters approve of their $k$ most preferred candidates, or voice their disapproval for their $m-k$ least preferred candidates. It is a scoring rule with weight vector $w=(1,\dots,1,0,\dots,0)$, where there are $k$ ones, $m-k$ zeroes and $0<k<m$.

Note that the $k$-approval rule is the same as the plurality rule when $k=1$, and it is the same as the anti-plurality rule when $m-k=1$. 
\begin{lemma}\label{k1}
The $k$-approval rule (kApp) is obviously manipulable if $n\leq \frac{m-2}{m-k}$. 
\end{lemma}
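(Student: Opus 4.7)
The plan is to prove obvious manipulability by exhibiting a best-case violation. First I would fix the tie-break ordering $\tau$ and denote the $j$-th candidate in $\tau$ by $P_j$; the hypothesis $n(m-k)\leq m-2$ yields $a:=(n-1)(m-k)\leq k-2$, which in particular guarantees that $P_{a+2}$ and $P_{k+1}$ are distinct candidates lying in $O$ (since $a+2\leq k<k+1\leq m$).

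Next I would construct voter $1$'s true preference $\succ_1$ so that her top-$k$ set equals the tie-break top, $\{P_1,\dots,P_k\}$, but her most preferred alternative is $c^{*}:=P_{a+2}$. Concretely, $\succ_1$ ranks $c^{*}$ first and the remaining members of $\{P_1,\dots,P_k\}$ in any order in positions $2$ through $k$, followed by $P_{k+1},\dots,P_m$. The truthful approval set is then $A_1=\{P_1,\dots,P_k\}$, and for $c^{*}$ to be the $\tau$-minimum of the universally-approved set all of $P_1,\dots,P_{a+1}$ would need to be disapproved by voters $2,\dots,n$. But the total number of disapprovals available to those voters is $(n-1)(m-k)=a<a+1$, so $c^{*}$ is not a possible outcome. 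Consequently $B_{\succ_1}(\succ_1,f)$ lies in $\{P_1,\dots,P_{a+1}\}$ and is strictly less preferred by voter~$1$ than $c^{*}$.

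Then I would consider the manipulation $\succ_1'$ whose top-$k$ set is $A_1':=\{P_1,\dots,P_a,P_{a+2},\dots,P_{k+1}\}$, obtained from $A_1$ by removing $P_{a+1}$ and inserting $P_{k+1}$. Inside $A_1'$ the $\tau$-predecessors of $c^{*}$ are only $\{P_1,\dots,P_a\}$, exactly $a$ of them. I would then produce a specific $\succ_{-1}$ witnessing both profitability and the best-case improvement: partition $\{P_1,\dots,P_a\}$ into $n-1$ blocks of size $m-k$ (feasible because $a=(n-1)(m-k)$) and let each of voters $2,\dots,n$ report a preference whose bottom $m-k$ positions are exactly one such block, so that $\bigcap_{i\geq 2}A_i=O\setminus\{P_1,\dots,P_a\}$. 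Under this profile the truthful report yields universally-approved set $\{P_{a+1},\dots,P_k\}$ with winner $P_{a+1}$, while the manipulation yields universally-approved set $\{P_{a+2},\dots,P_{k+1}\}$ with winner $c^{*}$. Since $c^{*}\succ_1 P_{a+1}$, the manipulation is profitable, and since $c^{*}$ is now achievable as an outcome while it was not under truthfulness, $B_{\succ_1}(\succ_1',f)=c^{*}\succ_1 B_{\succ_1}(\succ_1,f)$, violating the best-case inequality in the definition of NOM.

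The main obstacle I anticipate is verifying the feasibility ingredients that underpin this construction: that $P_{a+2}$ and $P_{k+1}$ genuinely exist, that the equal-size partition of $\{P_1,\dots,P_a\}$ used to realise the intended $\succ_{-1}$ is valid, and that each $A_i$ can really have size exactly $k$ while covering its assigned block. Each of these reduces to the numerical identity $a=(n-1)(m-k)\leq k-2$ combined with $k<m$, so the work is careful bookkeeping rather than a deep combinatorial argument. Once these checks are in place the best-case violation is immediate and the lemma follows.
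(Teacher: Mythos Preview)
Your argument is correct, but it establishes a \emph{best-case} obvious manipulation, whereas the paper's own proof of this lemma establishes a \emph{worst-case} one. In the paper, the labels $o_1,\dots,o_m$ are attached to the manipulator's true preference, the tie-break is arranged so that $o_k$ (the manipulator's $k$-th choice) has top priority, and the manipulation swaps $o_k$ out of the approval set; this shows $W_{\succ_i}(\succ_i',f)\succ_i W_{\succ_i}(\succ_i,f)$ directly. Your construction instead labels candidates by the given tie-break order $\tau$, chooses the manipulator's top choice to be $P_{a+2}$ with $a=(n-1)(m-k)$, and exploits the pigeonhole bound to show $P_{a+2}$ is unreachable under truth but reachable after dropping $P_{a+1}$ for $P_{k+1}$. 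Both routes are valid; yours essentially anticipates the ``if'' direction of the paper's later Lemma~\ref{BOM} (best-case obvious manipulability) specialised to $k$-approval, and combined with the paper's Theorem~\ref{BOMWOM} it would also recover the worst-case statement. A small presentational advantage of your version is that it is phrased directly for an arbitrary fixed tie-break $\tau$, making the independence from the particular tie-break explicit; the paper's proof achieves the same generality but only after one unwinds its labelling convention.
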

\begin{proof}
Suppose there are $n$ voters, the number of outcomes $m$ is at least $n(m-k)+2$, voter $i$'s true preferences are
\[\succ_i: o_1\succ_i o_2 \succ_i \dots \succ_i o_{m-1}\succ_i o_m,\]
and the fixed tie-break ordering is
\[\succ_L: o_k \succ_L o_1\succ_L o_2\succ_L \dots \succ_L o_{k+1}\succ_L o_{k+2}\succ_L \dots \succ_L o_{m-1}\succ_L o_m.\]

Under a $k$-approval rule, any voter may disapprove of their $m-k$ least preferred outcomes. Since there are a total of $n(m-k)$ disapprovals and $m\geq n(m-k)+2$, by the pigeonhole principle, there are at least 2 outcomes with zero disapprovals. Therefore the selected outcome must be the tie-break winner of the outcomes with zero disapproval votes, as they are approved by every voter.

Under a truthful ballot $\succ_i$, voter $i$ disapproves of outcomes $\{o_{k+1},\dots,o_{m}\}$, so $W_{\succ_i}(\succ_i,kApp)\notin\{o_{k+1},\dots,o_{m}\}$. We therefore have $W_{\succ_i}(\succ_i,kApp)=o_k$ 
{as at least two outcomes in $\{o_1,\dots,o_{k}\}$ must have zero disapproval votes, and $o_k$ has the highest tie-break priority}.

If voter $i$ instead disapproves of the outcomes in $\{o_{k}\}\cup\{o_{k+1},\dots,o_m\}\backslash \{o_{i'}\}$, where $k+1\leq i' \leq m$, then the worst case outcome satisfies $W_{\succ_i}(\succ'_i,kApp)\succ_i o_{k-1}$, as $o_{i'}$ always loses the tie-break with any outcome from $\{o_1,\dots,o_{k-1}\}$.\\
We therefore have $W_{\succ_i}(\succ'_i,kApp)\succ_iW_{\succ_i}(\succ_i,kApp)$, concluding the proof. \qed
\end{proof}
\begin{lemma}\label{k2}
The $k$-approval rule (kApp) is NOM if 
$n> \frac{m-2}{m-k}$.
\end{lemma}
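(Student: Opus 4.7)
My plan is to rewrite the condition $n > \frac{m-2}{m-k}$ as $n(m-k) \geq m-1$, equivalently $(n-1)(m-k) \geq k-1$, and verify the two NOM conditions (best-case and worst-case) separately.

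For the best-case condition, I will show that voter $i$'s top outcome $o_1$ is always achievable as the winner under truthful reporting, so no manipulation can strictly improve the best case. The construction is to have every other voter approve $o_1$ and distribute their combined disapprovals to cover each of $o_2, \dots, o_k$ at least once; the inequality $(n-1)(m-k) \geq k-1$ makes this covering feasible. In the resulting profile, $o_1$ uniquely attains $n$ approvals while every other outcome has at most $n-1$, so $o_1$ wins.

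For the worst-case condition, I will split into two sub-cases based on $(n-1)(m-k)$. In the tight case $(n-1)(m-k) = k-1$, I argue that under any ballot $B$ of voter $i$, the set of possible winners equals $B$: for an outcome $o \notin B$ to win, every element of $B$ would need at least one disapproval from the other voters, requiring $(n-1)(m-k) \geq k$, which fails. Thus the worst-case outcome under truthful reporting is $o_k$, while for any manipulation $A' \neq A$ the set $A'$ must contain some $o_\ell$ with $\ell > k$, yielding a worst-case of $o_\ell \prec_i o_k$, and NOM holds strictly.

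In the loose case $(n-1)(m-k) \geq k$, I will show that for any manipulation $A'$, some outcome weakly worse (under $\succ_i$) than the truthful worst-case $o^*$ is a possible winner when voter $i$ reports $A'$. A preliminary observation is that every outcome in the reported approval set is a possible winner, achieved by having the other voters approve it and cover the remaining $k-1$ elements of the set with disapprovals, needing only $(n-1)(m-k) \geq k-1$. Hence if $o^* \in A'$ we are done. Otherwise, I plan to adapt the witness profile for $o^*$ being a possible winner under truthful reporting---where all other voters approve $o^*$ and cover $A$ with disapprovals---by covering $A'$ instead (feasible since $(n-1)(m-k) \geq k = |A'|$) and directing the other voters' auxiliary approvals to outcomes with tie-break priority lower than $o^*$, so that $o^*$ remains the tie-break top among the max-score outcomes. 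The main obstacle will be this tie-break management: the adaptation must prevent any outcome with tie-break priority above $o^*$ from reaching the maximum score, and I expect this to require a case analysis on the tie-break position of $o^*$ relative to $A'$.
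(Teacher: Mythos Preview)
Your best-case argument and your tight-case worst-case argument are correct and essentially identical to the paper's. The gap is in your loose-case plan for the sub-case $o^*\notin A'$: you aim to show that $o^*$ itself remains a possible winner under $A'$, but this can simply be false. Take $n=3$, $m=6$, $k=4$, so $(n-1)(m-k)=4=k$ (loose case), tie-break $o_1\succ_L\cdots\succ_L o_6$, and $A=\{o_1,o_2,o_3,o_4\}$. Neither $o_5$ nor $o_6$ is achievable under $A$ (each would require $o_1,\dots,o_4$ to receive two disapprovals from the other voters, needing $8$ disapprovals when only $4$ are available), so $o^*=o_4\in A$. With $A'=\{o_1,o_2,o_3,o_5\}$, for $o_4$ to win it would need score $2$ while $o_1,o_2,o_3$ (all in $A'$, all with higher tie-break priority) each have score at most $1$; this again demands $6$ disapprovals from two voters, which is impossible. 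So $o^*$ is not a possible winner under $A'$, and no tie-break case analysis will rescue the stated goal. Your description of the truthful witness (``all other voters approve $o^*$ and cover $A$ with disapprovals'') already presupposes $o^*\notin A$, so you were implicitly working in only one branch without noticing the other.

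The paper avoids this by splitting on $o^*\in A$ rather than $o^*\in A'$. If $o^*\in A$ then $o^*=o_k$, and since $A'\neq A$ the set $A'$ contains some $o_\ell$ with $\ell>k$; your own preliminary observation makes $o_\ell$ a possible winner under $A'$, and $o_\ell\prec_i o_k=o^*$ finishes this branch in one line (this is exactly what handles the counterexample above, via $o_5\in A'$). Only when $o^*\notin A$ does one argue that $o^*$ stays achievable: there every outcome already carries at least one disapproval in the truthful witness profile (the maximum score being $n-1$), and the paper has the other voters ``negate'' voter $i$'s swap by moving their own disapprovals in the opposite direction, restoring all scores. Your plan becomes sound once you insert the easy $o^*\in A$ branch before attempting the construction.
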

\begin{proof}
Suppose that there are $n$ voters, $m\geq \frac{kn-1}{n-1}$ outcomes and without loss of generality that voter $i$'s true preferences are
\[\succ_i: o_1\succ_io_2\succ_i\dots\succ_io_m.\]
We note that $m\geq \frac{kn-1}{n-1} \iff n(m-k)\geq m-1$, so there are at least $m-1$ disapproval votes as each of the $n$ voters disapproves of $m-k$ outcomes. We first show that under these conditions, the $k$-approval rule is not best case obviously manipulable. Under $\succ_i$, voter $i$'s best case outcome of $B_{\succ_i}(\succ_i,kApp)=o_1$ is achievable by the voters voting such that $o_1$ has zero disapprovals and each of the other outcomes has at least one disapproval. Since $i$'s best case outcome is his first preference, it cannot be strictly improved by any manipulation.

We next show that in this scenario, the $k$-approval rule is not worst case obviously manipulable. We consider two cases.

\noindent
\textbf{Case 1 ($n(m-k)=m-1)$)}:

\noindent
In this case, there are $m=n(m-k)+1$ outcomes and $n(m-k)$ disapprovals, so by the pigeonhole principle, there must be at least one outcome with zero disapprovals. Under a truthful ballot, voter $i$ disapproves of outcomes $\{o_{k+1},\dots,o_m\}$, so his worst case outcome is $W_{\succ_i}(\succ_i,kApp)=o_k$, achieved by the other voters disapproving of outcomes $\{o_1,\ldots,o_{k-1}\}$. Now under any manipulation, at least one outcome from $\{o_{k+1},\ldots,o_m\}$ must be approved by voter $i$. This results in $W_{\succ_i}(\succ'_i,kApp)\in \{o_{k+1},\ldots,o_m\}$, as the other voters can vote such that every outcome except for voter $i$'s least preferred approved outcome has been disapproved at least once.

\noindent
\textbf{Case 2 ($n(m-k)>m-1)$)}:

\noindent
In this case, there are at least as many disapprovals as outcomes, so it is possible for each outcome to have at least one disapproval. If $W_{\succ_i}(\succ_i,kApp)\in \{o_{k+1},\ldots,o_m\}$, then each outcome must have at least one disapproval as voter $i$ disapproves of outcomes $\{o_{k+1},\dots,o_m\}$. Suppose voter $i$ misreports ballot $\succ'_i$, switching its disapprovals from at least one outcome in $\{o_{k+1},\ldots,o_{n}\}$ to at least one outcome in $\{o_{1},\ldots,o_{k}\}$. His worst case outcome under $\succ'_i$ cannot improve, as the other agents can `negate' the misreport by switching their disapprovals from the appropriate outcomes in $\{o_{1},\ldots,o_{k}\}$ to $\{o_{k+1},\ldots,o_{n}\}$. This must be possible as each outcome originally had at least one disapproval.

If $W_{\succ_i}(\succ_i,kApp)\in \{o_1,\ldots,o_k\}$, then any manipulation $\succ'_i$ by voter $i$ results in $W_{\succ_i}(\succ'_i,kApp)\in \{o_{k+1},\ldots,o_m\}$ as he will have approved of at least one outcome in $\{o_{k+1},\ldots,o_m\}$. Therefore $W_{\succ_i}(\succ_i,kApp)\succ W_{\succ_i}(\succ'_i,kApp)$.

By exhaustion of cases, we have $W_{\succ_i}(\succ_i,kApp)\pref_i W_{\succ_i}(\succ'_i,kApp)$, concluding our proof.\qed	
\end{proof}
\begin{remark}
We note that the obvious manipulability of $k$-approval when $m\geq n(m-k)+2$ and the not obvious manipulability of $k$-approval when $m= n(m-k)+1$ also holds in the case of weighted voters, as the argument relies on the number of outcomes exceeding the total number of disapprovals.
\end{remark}

Based on the two lemmas proved above, we achieve a characterization of the conditions under which the $k$-approval rule is obviously manipulable.

\begin{theorem}\label{theo:kapp}
The $k$-approval rule is obviously manipulable if and only if $n\leq \frac{m-2}{m-k}$.

\end{theorem}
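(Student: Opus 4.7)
The plan is straightforward: Theorem \ref{theo:kapp} is an immediate consequence of Lemmas \ref{k1} and \ref{k2}, which have already done all the work. The two lemmas cover complementary conditions on $n$, $m$, and $k$, so together they yield the biconditional.

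First I would observe that the conditions $n \leq \frac{m-2}{m-k}$ and $n > \frac{m-2}{m-k}$ exhaustively partition the space of admissible parameter triples $(n,m,k)$ with $n \geq 3$, $m \geq 3$, and $0 < k < m$. In particular, since $n$ and $m-k$ are positive integers, the inequality $n \leq \frac{m-2}{m-k}$ is equivalent to $n(m-k) \leq m-2$, i.e., $m \geq n(m-k) + 2$, which is precisely the regime in which Lemma \ref{k1} produces an explicit instance with a profitable worst-case manipulation. Symmetrically, $n > \frac{m-2}{m-k}$ is equivalent to $n(m-k) \geq m-1$, matching the hypothesis $m \geq \frac{kn-1}{n-1}$ used in the proof of Lemma \ref{k2} to establish NOM.

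For the forward direction (``only if''), I would take the contrapositive: if $n > \frac{m-2}{m-k}$, then by Lemma \ref{k2} the $k$-approval rule is NOM, so the $k$-approval rule cannot be obviously manipulable. For the reverse direction (``if''), the conclusion follows directly from Lemma \ref{k1}. Writing out the two implications and invoking the two lemmas by name is all that is required.

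There is essentially no obstacle, since the substantive content has already been absorbed into the two lemmas; the only subtlety worth flagging is the algebraic equivalence between the fractional threshold in the theorem statement and the integer inequality $n(m-k) \geq m-1$ (respectively $\leq m-2$) that appears inside the lemmas. A single line verifying this equivalence suffices to bridge the two formulations and complete the proof.
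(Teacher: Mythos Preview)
Your proposal is correct and takes essentially the same approach as the paper, which simply states that the theorem follows from Lemma~\ref{k1} and Lemma~\ref{k2}. Your additional remark verifying the algebraic equivalence between $n\leq \frac{m-2}{m-k}$ and $m\geq n(m-k)+2$ (respectively $n(m-k)\geq m-1$) is a helpful clarification but not strictly necessary.
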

\begin{corollary}\label{corol:NOMM}
The plurality rule is NOM.
\end{corollary}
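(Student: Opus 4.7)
The plan is to observe that the plurality rule is exactly the $k$-approval rule with $k=1$, so the corollary should follow as an immediate specialization of Theorem~\ref{theo:kapp}. I would substitute $k=1$ into the characterizing inequality $n \leq \frac{m-2}{m-k}$ and argue that, under the standing assumptions $n \geq 3$ and $m \geq 3$, the condition for obvious manipulability is never met.

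Concretely, with $k=1$ the threshold becomes $\frac{m-2}{m-1}$. Since $m \geq 3$ implies $m-2 < m-1$, we have $\frac{m-2}{m-1} < 1$, and combining this with $n \geq 3$ gives $n > \frac{m-2}{m-1}$. By the contrapositive of Theorem~\ref{theo:kapp}, plurality is not obviously manipulable, which is exactly the claim.

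As a sanity check, I would note that the same conclusion can be reached via Theorem~\ref{prop:almostunan}: if any $n-1 \geq 2$ voters list the same outcome $o$ first, then $o$ collects at least $n-1 \geq 2$ plurality points while every other outcome collects at most $1$, so $o$ wins. Hence plurality is almost-unanimous for $n \geq 3$, and Theorem~\ref{prop:almostunan} again yields NOM. Either route suffices, and I would choose the first for its brevity.

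I do not anticipate any genuine obstacle here; the content of the corollary is entirely carried by Theorem~\ref{theo:kapp} (or alternatively by Theorem~\ref{prop:almostunan}), and the remaining work is a one-line arithmetic verification that the forbidden regime $n \leq \frac{m-2}{m-1}$ is vacuous once $n \geq 3$.
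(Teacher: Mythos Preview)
Your proposal is correct and matches the paper's approach: the paper also specializes Theorem~\ref{theo:kapp} to $k=1$, observing (in the equivalent rearranged form) that $m \leq n(m-k)+1$ always holds, hence plurality is NOM. Your additional sanity check via almost-unanimity is a valid alternative route that the paper does not spell out for this corollary.
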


Since plurality is generally considered to be one of {easiest rules} to manipulate, the corollary above underscores the strength of obvious manipulations. {We give the following intuition for the result on $k$-approval. Suppose a small committee is applying the $k$-approval rule to select a prize winner out of many candidates, and that certain candidates will be approved by every voter. The manipulator may also have a general idea of these candidates conditional on their report. If a fixed tie-break method is used (such as selecting the oldest candidate), the manipulator may disapprove of the oldest candidate who would otherwise win, instead approving a younger candidate who would not be selected regardless.

 }

\subsection{Strict Positional Scoring Rules}

In the previous section, we noted that the $k$-approval rule is obviously manipulable. This may lead to the question of whether the lack of strictly decreasing scoring weights contributes to the obvious manipulability of a positional scoring rule. Hence, we focus on strict positional scoring rules in the following section.

\begin{definition}
A positional scoring rule with weight vector $w=(s_1,s_2,\ldots,s_m)$ is \emph{strict} if $s_i>s_{i+1}$ for all $i\in \{1,2,\ldots,m-1\}$.
\end{definition}

We first note a strict positional scoring rule can be obviously manipulable. 

\begin{lemma}\label{lemma:spsr}
There exists a strict positional scoring rule that can admit a voter with veto power and is obviously manipulable.
\end{lemma}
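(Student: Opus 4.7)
The plan is to establish this existence claim by exhibiting a concrete strict positional scoring rule together with specific values of $n$ and $m$ that satisfy both properties. Since Lemma~\ref{prop:noveto} already guarantees that every obviously manipulable voting rule must admit a non-dictatorial voter with veto power, it is enough to construct a single strict PSR that is obviously manipulable; the veto power claim will then follow at once. A useful preliminary observation is that best-case OM is structurally unavailable for any positional scoring rule: when every other voter ranks voter $i$'s top-preferred outcome first, that outcome attains the uniquely maximum score, so voter $i$'s truthful best case always equals their top preference and cannot be strictly improved by any manipulation. Hence the task reduces to exhibiting a strict PSR that is worst-case obviously manipulable.

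The construction I would try is a small perturbation of anti-plurality with $n = 3$ voters and $m = 5$ outcomes, which is OM by Theorem~\ref{theo:kapp}. Specifically, I would take weights of the form $(1,\,1-\epsilon_1,\,1-\epsilon_2,\,1-\epsilon_3,\,0)$ for carefully chosen $0 < \epsilon_1 < \epsilon_2 < \epsilon_3$, paired with the fixed tiebreak used in the proof of Theorem~\ref{theo:kapp}. First I would verify veto power: under voter $i$'s truthful report $o_1 \succ o_2 \succ o_3 \succ o_4 \succ o_5$, voter $i$ contributes $0$ to $o_5$, so $o_5$'s maximum attainable score is $2$; a direct counting argument on the remaining score budget distributed among $\{o_1,o_2,o_3,o_4\}$ forces at least one of these to strictly exceed $2$, so $o_5$ can never win. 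Then, mirroring the anti-plurality OM manipulation, I would consider the report $\succ_i' = o_1 \succ o_2 \succ o_3 \succ o_5 \succ o_4$ and argue that the truthful worst-case outcome $o_4$ is replaced, under $\succ_i'$, by an outcome strictly preferred to $o_4$ by voter $i$.

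The main obstacle is verifying this worst-case improvement rigorously in the strict setting. Unlike anti-plurality, where all approved outcomes carry equal weight and ties are decided cleanly by tiebreak, in the perturbed rule the adversary can exploit the small positional gaps by ranking the swapped outcome $o_5$ at a high-weight position, potentially inflating $o_5$'s score above $o_3$'s and making $o_5$ --- voter $i$'s true worst outcome --- the new worst case, which would break the intended improvement. Overcoming this requires tuning the gaps $\epsilon_i$ carefully so that the adversary's maximum feasible boost of $o_5$ stays below the score of some outcome that voter $i$ truly prefers to $o_5$; concretely, I would keep $\epsilon_3$ very small relative to $\epsilon_2$ so that demoting $o_4$ into position $m$ costs voter $i$ essentially nothing. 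Should this route fail to close, the fallback is to design a weight vector directly within the regime $n \leq s_1/(s_1-s_2)+1$ where the earlier NOM theorem does not apply, verify worst-case OM by an explicit case analysis, and again invoke Lemma~\ref{prop:noveto} for the veto power property. Because the statement is purely existential, a single weight vector and $(n,m)$ pair for which the analysis closes is sufficient.
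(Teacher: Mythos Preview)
Your proposal has a genuine gap: the manipulation you describe cannot succeed for \emph{any} choice of the perturbation parameters. Concretely, under the report $\succ_i' = o_1 \succ o_2 \succ o_3 \succ o_5 \succ o_4$ the adversary can always elect $o_5$ by voting
\[
j:\ o_5 \succ o_4 \succ o_2 \succ o_3 \succ o_1,\qquad
k:\ o_5 \succ o_4 \succ o_3 \succ o_2 \succ o_1.
\]
The point is that the adversaries ``waste'' their high-weight position $2$ on the now-vetoed outcome $o_4$, leaving only positions $3,4,5$ for $o_1,o_2,o_3$. The resulting scores are $o_5=3-\epsilon_3$, $o_2=3-\epsilon_1-\epsilon_2-\epsilon_3$, $o_3=3-2\epsilon_2-\epsilon_3$, $o_4=2-2\epsilon_1$, $o_1=1$, so $o_5$ is the unique winner for \emph{every} admissible choice of $\epsilon_1<\epsilon_2<\epsilon_3$. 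Thus $W_{\succ_i}(\succ_i',f)=o_5\prec_i o_4=W_{\succ_i}(\succ_i,f)$, and the manipulation is not even weakly profitable in the worst case. Your suggested fix of keeping $\epsilon_3$ small relative to $\epsilon_2$ moves in the wrong direction: it \emph{raises} voter $i$'s contribution to $o_5$ at position $4$, making it easier for the adversary to push $o_5$ through. The structural issue is that swapping the bottom two positions merely exchanges which outcome is vetoed; by anonymity of the scoring rule the adversary can symmetrically target the newly un-vetoed $o_5$, and since the rule is strict there are no ties for the tie-break ordering to rescue you (which is precisely how the anti-plurality argument of Lemma~\ref{k1} works).

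The paper takes a different route. It uses $m=4$, the vector $w=(6,5,4,0)$, and a true preference $o_1\succ_i o_2\succ_i o_4\succ_i o_3$ in which voter $i$'s least preferred outcome $o_3$ stays in the \emph{last} position under both the truthful report and the manipulation $o_1\succ o_4\succ o_2\succ o_3$. Under truth, $o_3$ is achievable only via a tie with $o_4$ at score $12$ (and $o_3\succ_L o_4$). The manipulation does not touch the bottom slot at all; instead it promotes $o_4$ from position $3$ to position $2$, raising $o_4$'s contribution from $4$ to $5$, and a short feasibility count then shows the adversaries can no longer simultaneously suppress $o_1,o_2,o_4$ below $12$. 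So the key idea you are missing is to leave the true worst outcome in last place and manipulate the \emph{middle} positions so that the tie which made $o_3$ achievable under truth becomes infeasible. (As a minor aside, your ``best-case OM is unavailable for any positional scoring rule'' is false as stated---see Lemma~\ref{BOM}---though it is true for the strict rules you actually need.)
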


In the following lemma, we also find that a strict positional scoring rule is not necessarily obviously manipulable if it admits a voter with veto power.
\begin{lemma}\label{yuck}
There exists a class of strict positional scoring rules that can admit a voter with veto power but are NOM.
\end{lemma}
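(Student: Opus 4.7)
The plan is to exhibit, for each integer $m\geq 4$ and a sufficiently small $\epsilon=\epsilon(m)>0$, the strict positional scoring rule $f_m$ with weight vector $(1,\,1-\epsilon,\,1-2\epsilon,\,\ldots,\,1-(m-2)\epsilon,\,0)$, and to show that when $n=m-1$ voters are present this rule simultaneously admits a voter with veto power and is NOM. Letting $m$ range over $\{4,5,\ldots\}$ then produces the required infinite class of strict positional scoring rules. Intuitively, this family is a strict perturbation of anti-plurality, so the anti-plurality-style veto argument still goes through, while the ``veto only the bottom element'' structure continues to enforce NOM.

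First I would establish veto power. Fix a voter $i$ whose report ranks some outcome $o$ last and let the adversary try to make $o$ win. The best the adversary can do is have every other voter rank $o$ first, giving $o$ a score of $(n-1)s_1=n-1$. Each of the $n-1$ other voters ranks exactly one outcome last; since $n-1<m-1$, by pigeonhole some non-$o$ outcome $o^{\star}$ is never ranked last and therefore earns at least $n\cdot s_{m-1}=n(1-(m-2)\epsilon)$. Choosing $\epsilon<1/[n(m-2)]$ forces this quantity to exceed $n-1$, so $o$ cannot win, and voter $i$ therefore has veto power over $o$.

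Next I would establish both NOM conditions by characterising the possible-winner set under any report $\succ_i'$ as exactly $O\setminus\{o^{\mathrm{last}}\}$, where $o^{\mathrm{last}}$ is the bottom element of $\succ_i'$. The veto step above handles the ``not a winner'' containment. For the other containment, given any target $o^{\star}\neq o^{\mathrm{last}}$ I would construct a profile in which all other voters rank $o^{\star}$ first and $o^{\mathrm{last}}$ last, with the intermediate $m-2$ ranks spread cyclically (Latin-square style), so that the $\Theta(1)$ lead $o^{\star}$ gets from the first-place votes dominates the $O(\epsilon)$ contributions to any competitor. Given this characterisation, best and worst cases follow immediately: truth gives $B_{\succ_i}(\succ_i,f_m)=o_1$ and $W_{\succ_i}(\succ_i,f_m)=o_{m-1}$; any manipulation that keeps $o_m$ last has the same possible-winner set and hence the same best and worst cases; any manipulation that moves $o_m$ out of the last slot makes $o_m$ a possible winner, forcing the worst case down to $o_m\prec_i o_{m-1}$, while the best case can never improve beyond $o_1$. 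Both NOM conditions therefore hold.

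The main obstacle is the forward direction of the possible-winners characterisation: under the perturbed strict weights, ties between the never-last outcomes must be broken strictly in favour of the target $o^{\star}$, which is a $\Theta(1)$ versus $O(\epsilon)$ margin that has to be proved uniformly over every target outcome and every truthful report of $i$. This reduces to bounding the column sums of the permutation matrix encoding the middle ranks and picking $\epsilon$ small enough (consistently with the veto bound) so that $o^{\star}$'s $s_1$-based advantage from being first for all other voters is never overturned by $\epsilon$-order corrections. Once this technical step is completed, the veto claim together with the best- and worst-case sub-cases above combine to yield the lemma.
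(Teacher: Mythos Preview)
Your proposal is correct and uses the same scoring-rule family as the paper: a strict $\epsilon$-perturbation of anti-plurality, written in the paper as $(\omega+m\epsilon,\omega+(m-1)\epsilon,\ldots,\omega+2\epsilon,0)$ with $\omega>m\epsilon(n-1)$, which after normalisation is exactly your $(1,1-\epsilon,\ldots,1-(m-2)\epsilon,0)$. The veto argument via pigeonhole is identical.

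Where you differ is in the organisation of the NOM argument. The paper fixes a concrete adversarial profile for the other voters, shows it forces $o_{m-1}$ to win under truth, and then does a case split on whether a manipulation moves $o_{m-1}$ up, leaves it in place, or moves it down (the last case un-vetoes $o_m$). You instead prove the cleaner structural fact that for any report of $i$ the possible-winner set is exactly $O\setminus\{o^{\mathrm{last}}\}$, and then case-split on whether $o_m$ remains last. Your route is more conceptual and dispatches the best-case condition for free; the paper's route avoids the Latin-square construction but is more ad hoc. The paper also treats general $n$ (splitting into $m>n$ and $m\le n$), whereas you specialise to $n=m-1$; for the lemma's existential claim your restriction suffices. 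Your identified ``main obstacle'' in fact resolves cleanly: with the Latin-square middle ranks and $n-1=m-2$ other voters, $o^{\star}$ beats every other non-last candidate by at least $\epsilon(m-2)(m-3)/2>0$ for $m\ge 4$, so no tie-breaking issue arises and the $\epsilon$-bound you already imposed for the veto step is enough.
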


\begin{definition}
A strict positional scoring rule with $w=(s_1,s_2,\ldots,s_m)$ has \emph{diminishing differences} if $s_i-s_{i+1}>s_{i+1}-s_{i+2}$ for all $i\in\{1,2,\ldots,m-2\}$. We say it has \emph{weakly diminishing differences} if $s_i-s_{i+1}\geq s_{i+1}-s_{i+2}$ for all $i\in\{1,2,\ldots,m-2\}$.
\end{definition}
An example of such a rule is the Harmonic-Borda/Dowdall system used in Nauru, which has weight vector $w=(1,1/2,\ldots,1/m)$ \citep{Reil02a}. It is more favourable towards candidates {that are the top preference of many voters}, and {has been described as a scoring rule that ``lies between plurality and the Borda count" \citep{FrGr14a}}.

Next, we prove that a strict positional scoring rule with weakly diminishing differences is NOM.
\begin{theorem}\label{th:sPOS}
A strict positional scoring rule with weakly diminishing differences is NOM.
\end{theorem}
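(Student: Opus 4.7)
The plan is to show that under truthful reporting the best case is voter $i$'s top choice $o_1$ and the worst case is their bottom choice $o_m$, and moreover that $o_m$ remains electable under every report $\succ_i'$; since $o_1$ and $o_m$ are the extremes of $\succ_i$, both NOM conditions then follow trivially.

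The best case is immediate: if the other $n-1$ voters all rank $o_1$ first, its total score $n s_1$ strictly exceeds $n s_2$ and hence beats every alternative, giving $B_{\succ_i}(\succ_i, f) = o_1$. For the worst case I would exhibit, for every report $\succ_i': v_1 \succ v_2 \succ \cdots \succ v_m$, an adversary profile that elects $o_m$. My proposed construction: each of the other voters submits the identical ranking that places $o_m$ first and then lists the remaining outcomes in the reverse of the order they appear in $\succ_i'$. Setting $p^* = r'(o_m)$, this gives $o_m$ the score $(n-1) s_1 + s_{p^*}$, while each $v_j$ (for $j \neq p^*$) ends up at adversary-position $m+2-j$ when $j > p^*$ and $m+1-j$ when $j < p^*$. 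The case $j > p^*$ is trivial because both $s_1 - s_{m+2-j}$ and $s_{p^*} - s_j$ are strictly positive.

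The main obstacle is the case $j < p^*$, which requires the strict inequality $(n-1)(s_1 - s_{m+1-j}) > s_j - s_{p^*}$. I would write each side as a telescoping sum of consecutive gaps $d_k = s_k - s_{k+1}$ and invoke weakly diminishing differences to compare them term by term, yielding $s_1 - s_{m+1-j} \geq s_j - s_{p^*}$; the hypothesis $n \geq 3$ (so $n-1 \geq 2$) together with strictness of the weights (so $s_j - s_{p^*} > 0$) then upgrades this to the needed strict inequality. The key design choice is reversing $\succ_i'$ rather than $\succ_i$ in the adversary ranking, which pushes the outcomes that $i$'s manipulation boosts into low-scoring adversary positions---exactly the pairing that WDD is strong enough to neutralize.
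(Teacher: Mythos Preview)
Your proposal is correct and follows essentially the same approach as the paper. Both arguments use the adversary profile that places $o_m$ first and the remaining alternatives in the reverse of voter $i$'s reported order, and both derive the decisive inequality $(n-1)(s_1 - s_{m+1-j}) > s_j - s_{p^*}$ via the WDD telescoping-of-gaps argument combined with $n\ge 3$; the only stylistic difference is that you treat every report $\succ_i'$ in a single unified calculation, whereas the paper first handles the truthful report in detail and then sketches the manipulation cases more briefly.
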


\begin{corollary}
The Borda and Harmonic-Borda/Dowdall rules are NOM.
\end{corollary}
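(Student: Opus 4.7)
The plan is to verify that both rules fall under the hypothesis of Theorem~\ref{th:sPOS}, namely that they are strict positional scoring rules with weakly diminishing differences, and then invoke that theorem directly. Since Theorem~\ref{th:sPOS} is already available, the only work here is a short computation on the score vectors.

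For the Borda rule, the score vector is $w=(m-1,m-2,\ldots,1,0)$, i.e.\ $s_i=m-i$. The consecutive gaps are $s_i-s_{i+1}=1$ for every $i\in\{1,\ldots,m-1\}$, so the rule is strict (all gaps are positive) and the differences of differences satisfy $s_i-s_{i+1}=s_{i+1}-s_{i+2}=1$, which gives weakly diminishing differences (with equality throughout). Hence Theorem~\ref{th:sPOS} applies.

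For the Harmonic-Borda/Dowdall rule, the score vector is $w=(1,1/2,\ldots,1/m)$, i.e.\ $s_i=1/i$. The consecutive gap is
\[
s_i-s_{i+1}=\frac{1}{i}-\frac{1}{i+1}=\frac{1}{i(i+1)},
\]
which is strictly positive for all $i$, so the rule is strict. Comparing successive gaps, $s_i-s_{i+1}=\tfrac{1}{i(i+1)}>\tfrac{1}{(i+1)(i+2)}=s_{i+1}-s_{i+2}$ for every $i\in\{1,\ldots,m-2\}$, so the rule even has strictly (hence weakly) diminishing differences. Theorem~\ref{th:sPOS} then gives NOM.

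There is no real obstacle here; the corollary is essentially bookkeeping on the two specific score vectors to confirm they meet the two conditions of strictness and (weakly) diminishing differences, after which Theorem~\ref{th:sPOS} does all the work. If anything, the only thing worth being careful about is that both rules are genuinely covered by the earlier definition of a positional scoring rule (they are, since the weights are weakly decreasing and not all equal) and that strictness is verified from positive consecutive gaps rather than taken for granted.
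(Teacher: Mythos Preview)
Your proposal is correct and takes essentially the same approach as the paper: the corollary is stated immediately after Theorem~\ref{th:sPOS} with no separate proof, so the intended argument is precisely to observe that both score vectors are strictly decreasing with weakly diminishing differences and then apply that theorem. Your explicit verification of the gap computations for $s_i=m-i$ and $s_i=1/i$ is exactly what is implicitly being invoked.
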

\begin{remark}
Lemma \ref{yuck} exemplifies a class of strict positional scoring rules which do not satisfy weakly diminishing differences but are NOM.
\end{remark}
\subsection{Obvious Manipulability in the Best Case}
Although our previous results focus on worst case obvious manipulability, it is possible for a positional scoring rule to be best case obviously manipulable.
\begin{lemma}\label{BOM}
Assuming $m,n\geq 3$, a positional scoring rule $f$ is best case obviously manipulable if and only if for some $k>1$, the first $k$ elements of the scoring vector are the same and $n \leq \frac{m-2}{m-k}$.
\end{lemma}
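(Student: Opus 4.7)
The plan is to leverage the structural consequence of $s_1=\cdots=s_k$: only the voters' top-$k$ \emph{sets} determine which candidates achieve the maximum possible score $ns_1$, and the set $T$ of max-score candidates equals the intersection of the voters' top-$k$ sets, with $|T|\geq m-n(m-k)$ by pigeonhole.  The winner is then the tie-break top of $T$, reducing BCOM to a covering / exclusion-slot count in the spirit of Lemmas~\ref{k1} and~\ref{k2}.

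For the $(\Leftarrow)$ direction, let $k$ be maximal with $s_1=\cdots=s_k>s_{k+1}$ and rewrite the hypothesis as $(n-1)(m-k)\leq k-2$.  Setting $|A|\coloneqq(n-1)(m-k)+1\leq k-1$, I take voter $i$'s truthful preference $o_1\succ_i\cdots\succ_i o_m$ and choose the tie-break so that $o_2,\ldots,o_{|A|+1}$ occupy the top $|A|$ positions, then $o_1$, then $o_{|A|+2},\ldots,o_m$.  Under truth-telling voter $i$'s exclusions lie in $\{o_{k+1},\ldots,o_m\}$ and are disjoint from $A=\{o_2,\ldots,o_{|A|+1}\}$, while the other voters together contribute only $|A|-1$ exclusion slots; hence some member of $A$ always survives in $T$ and tie-break beats $o_1$, giving $B_{\succ_i}(\succ_i,f)=o_2$.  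The manipulation $\succ_i'$ then swaps $o_2$ with $o_{k+1}$ in voter $i$'s ballot, removing $o_2$ from $i$'s top-$k$; I pair this with a $\succ_{-i}$ in which the other $n-1$ voters partition $A\setminus\{o_2\}$ (of size exactly $(n-1)(m-k)$) among their exclusion sets.  A direct computation then shows $T'=\{o_1\}\cup\{o_{|A|+2},\ldots,o_k\}\cup\{o_{k+1}\}$, whose tie-break top is $o_1$, so $B_{\succ_i}(\succ_i',f)\succeq_i o_1\succ_i o_2$; profitability holds because the same $\succ_{-i}$ paired with truthful $\succ_i$ still produces $o_2$.

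For the $(\Rightarrow)$ direction I argue the contrapositive.  If $s_1>s_2$, the profile in which every voter ranks $o_1$ first gives $o_1$ the unique max score $ns_1$ (any rival scores at most $(n-1)s_1+s_2<ns_1$), so $B_{\succ_i}(\succ_i,f)=o_1$ and no manipulation can strictly improve it.  Otherwise, let $k^*$ be the maximal index with $s_1=\cdots=s_{k^*}$ and assume $n>(m-2)/(m-k^*)$, equivalently $(n-1)(m-k^*)\geq k^*-1$: the $n-1$ other voters have just enough exclusion slots to cover $\{o_2,\ldots,o_{k^*}\}$ collectively while keeping $o_1$ on top of every ranking, so against such a $\succ_{-i}$ we get $T=\{o_1\}$ and $o_1$ is the unique maximum scorer regardless of tie-break, again forcing $B_{\succ_i}(\succ_i,f)=o_1$ and ruling out BCOM.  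The main obstacle will be careful book-keeping: I need to verify that the exclusion assignments in both directions extend to valid full rankings of all $m$ candidates (each voter has $m-k$ bottom slots to distribute among the chosen exclusions), and handle a few boundary cases --- notably $|A|+1=k$, which makes $\{o_{|A|+2},\ldots,o_k\}$ empty and leaves $o_{k+1}$ alone as the ``sacrificial'' candidate below $o_1$ in the tie-break.
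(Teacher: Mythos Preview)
Your proposal is correct and follows essentially the same strategy as the paper: both reduce the question to a pigeonhole/covering count on ``exclusion slots'' (positions $k+1,\ldots,m$), exploiting that when $s_1=\cdots=s_k$ the winner is the tie-break top of the intersection of top-$k$ sets, which is nonempty whenever $n(m-k)\leq m-2$. Your $(\Rightarrow)$ direction is identical to the paper's Case~1 (build $\succ_{-i}$ so that $\{o_2,\ldots,o_{k^*}\}$ is covered by the other voters' $(n-1)(m-k^*)\geq k^*-1$ exclusions, making $o_1$ the unique maximizer regardless of tie-break). The only real difference is the concrete construction in the $(\Leftarrow)$ direction: the paper uses the tie-break $o_k\succ_L\cdots\succ_L o_2\succ_L o_1\succ_L o_{k+1}\succ_L\cdots$ and obtains $B_{\succ_i}(\succ_i,f)=o_{c+1}$ with the manipulation improving it to $o_c$ (where $c=(k-1)-(n-1)(m-k)$), whereas you place the block $A=\{o_2,\ldots,o_{|A|+1}\}$ at the very top of the tie-break and obtain $B_{\succ_i}(\succ_i,f)=o_2$ improving to $o_1$. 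Your variant is arguably cleaner since the manipulation reaches $i$'s true first choice directly, but the underlying counting argument is the same.
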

Next, we demonstrate a fundamental connection between best case obvious manipulations and worst case obvious manipulations. 
\begin{theorem}\label{BOMWOM}
Assuming $m,n\geq 3$, for any positional scoring rule, if a voter's preference relation $\succ_i$ admits a best case obvious manipulation, then it also admits a worst case obvious manipulation.
\end{theorem}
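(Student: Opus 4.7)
The plan is to exploit Lemma~\ref{BOM}: because $\succ_i$ admits a best case obvious manipulation, the rule itself is best case obviously manipulable, and hence (by Lemma~\ref{BOM}) its scoring vector satisfies $s_1=s_2=\cdots=s_k$ for some $k>1$ together with $n(m-k)\leq m-2$. A pigeonhole argument in the spirit of Lemma~\ref{k1} shows that the only candidates attaining the maximum score $n\cdot s_1$ are those placed in every voter's top $k$, and at least two such candidates always exist; so the winner must be the $\succ_L$-maximum of this universally-top-$k$ set, and only each voter's top $k$ set matters. I would carry out this reduction first. Setting $s_{\min}:=k-(n-1)(m-k)\geq 2$ and $p:=k-s_{\min}+1$, I would next argue (via a short counting of which subsets of voter $i$'s top $k$ can be realised as the universally-top-$k$ set as $\succ_{-i}$ varies) that the set of achievable outcomes $\{f(\succ_i',\succ_{-i}):\succ_{-i}\}$, when voter $i$'s report $\succ_i'$ has top $k$ equal to some $T$, is exactly the top $p$ elements of $T$ in $\succ_L$, which I denote $R(T)$.

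Assume without loss of generality that $\succ_i:o_1\succ\cdots\succ o_m$ and set $A=\{o_1,\ldots,o_k\}$, so $B_{\succ_i}(\succ_i,f)=o_b$ and $W_{\succ_i}(\succ_i,f)=o_w$ where $b$ and $w$ are the smallest and largest $\succ_i$-indices in $R(A)$. Let $H$ be the set of candidates that can appear in $R(T)$ for \emph{some} $T$; since $o\in R(T)$ requires $T$ to contain at least $k-p$ candidates strictly $\prec_L o$, one sees $H$ is precisely the top $m-k+p$ of $\succ_L$. Automatically $R(A)\subseteq H$, so $R(A)\setminus\{o_w\}$ supplies $p-1$ distinct elements of $H\cap\{o_1,\ldots,o_{w-1}\}$.

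The crux is to use the best case obvious manipulation of $\succ_i$ to produce one further element of this intersection. By hypothesis some profitable $\succ_i^*$ yields $B_{\succ_i}(\succ_i^*,f)=o_{i^*}$ with $i^*<b$, and then $o_{i^*}$ must lie in $R(T)$ for the associated top-$k$ set, hence $o_{i^*}\in H$. Since $i^*<b\leq w$ we have $o_{i^*}\in\{o_1,\ldots,o_{w-1}\}$, and since $i^*<b=\min_{\succ_i}R(A)$ we have $o_{i^*}\notin R(A)$, so $o_{i^*}$ is distinct from the $p-1$ elements already found. Therefore $|H\cap\{o_1,\ldots,o_{w-1}\}|\geq p$. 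Choose any $S'\subseteq H\cap\{o_1,\ldots,o_{w-1}\}$ of size $p$; because the $\succ_L$-lowest element of $S'$ lies in $H$, there are at least $k-p$ elements of $O$ strictly $\prec_L$ every element of $S'$, so $T:=S'\cup\{k-p\text{ such elements}\}$ is a size-$k$ set with $R(T)=S'\subseteq\{o_1,\ldots,o_{w-1}\}$. Any $\succ_i'$ whose top $k$ equals $T$ is then a profitable manipulation (taking the $\succ_{-i}$ that realises the truthful worst case $o_w$, one checks the corresponding manipulation outcome lies in $R(T)$ and hence strictly beats $o_w$) and satisfies $W_{\succ_i}(\succ_i',f)\succ_i o_w=W_{\succ_i}(\succ_i,f)$, giving the desired worst case obvious manipulation. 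The main obstacle is the counting in this final paragraph: correctly pinning down $H$ from the conditions of Lemma~\ref{BOM} and verifying that the best case witness $o_{i^*}$ is genuinely distinct from $R(A)\setminus\{o_w\}$ so the count really reaches $p$; once those are in hand, the construction of $T$ and the profitability check are routine.
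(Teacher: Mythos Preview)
Your proposal is correct and follows essentially the same approach as the paper's proof: both invoke Lemma~\ref{BOM} to reduce to the ``top-$k$ approval'' structure, characterize the achievable outcomes under a given top-$k$ set $T$ as the $\succ_L$-highest $p=(n-1)(m-k)+1$ elements of $T$ (your $R(T)$, the paper's $O_f$), identify the overall feasible set $H=O_f'$ as the $\succ_L$-highest $n(m-k)+1$ outcomes, use the best-case witness to locate an element of $H\setminus R(A)$ strictly preferred to $o_w$, and then build a top-$k$ set whose achievable outcomes all lie strictly above $o_w$. The paper's explicit disapproval set $O_v'=\{o_w\}\cup O_f'\setminus(O_f\cup\{o_b'\})$ is exactly your construction with the specific choice $S'=(R(A)\setminus\{o_w\})\cup\{o_{i^*}\}$ and the padding elements taken from $O\setminus O_f'$.
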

\begin{proof}
Suppose for some positional scoring rule $f$ that a voter's preference relation $\succ_i$ admits a best case obvious manipulation. From Lemma \ref{BOM}, for some $k>1$, the first $k$ elements of the scoring vector must be the same, and we have $n\leq \frac{m-2}{m-k}$. Consequently, any outcome selected under $f$ must be in the top $k$ outcomes of each voter's report. We say that a voter `approves' his $k$ most preferred outcomes, and `disapproves' of his $m-k$ least preferred outcomes. An outcome cannot be chosen by $f$ if it has a disapproval vote from at least one voter.

We now construct the set of feasible outcomes $O_f$ which can be selected under the voter's preference relation $\succ_i$ and some $\succ_{-i}$. Let $O_v$ be the $m-k$ disapproved outcomes by $i$ under $\succ_i$. Since any outcome with at least one disapproval vote cannot be chosen, no outcome in $O_v$ can be selected. Now consider the set $O\backslash O_v$. Suppose without loss of generality that $O\backslash O_v = \{o_1,\dots,o_k\}$, with tie-break ordering $\succ_L: o_1\succ_L\dots\succ_Lo_k$. Denote $c:=(n-1)(m-k)$ as the number of disapproval votes that the other $n-1$ voters can distribute. For $j\in \{1,\dots,c+1\}$, outcome $o_j$ can be selected if the other voters cast disapproval votes for outcomes $\{o_1,\dots,o_{c+1}\}\backslash \{o_j\}$. Furthermore, outcomes $o_{c+2},\dots,o_k$ cannot be selected, regardless of how the other voters report. Therefore the set of feasible outcomes $O_f$ are the $c+1$ highest tie-breaking ranked outcomes of the set $O\backslash O_v$. Voter $i$'s best case outcome is its most preferred outcome in $O_f$, whilst its worst case outcome is its least preferred outcome in $O_f$.  We denote $o_b:=B_{\succ_i}(\succ_i,f)$ as $i$'s best case outcome, and $o_w:=W_{\succ_i}(\succ_i,f)$ as $i$'s worst case outcome.

We now define the set of feasible outcomes $O_f'$ under any preference report by voter $i$. This is the $n(m-k)+1$ highest tie-break ranked outcomes of $O$. Now suppose $\succ_i$ admits a best case obvious manipulation. There must exist an outcome $o_b'\in O_f'\backslash O_f$ that $i$ prefers over $o_b$. Consider the set $O_v'=\{o_w\}\cup O_f'\backslash (O_f\cup o_b')$. Since $o_w\not\in O_f'\backslash O_f$ and $o_b'\not\in O_f$, we have
\begin{align*}
|O_v'|&=|\{o_w\}|+|O_f'|-|O_f|-|\{o_b'\}|\\
&= 1+n(m-k)+1-(n-1)(m-k)-1-1\\
&=m-k.
\end{align*} 
We now deduce $i$'s worst case outcome $W_{\succ_i}(\succ_i',f)$ under the manipulation $\succ_i'$ where voter $i$ disapproves of all outcomes from $O_v'$. Under $\succ_i'$, every outcome in $O_f'\backslash O_f$ except for $o_b'$ has a disapproval vote and therefore cannot be selected. The outcome $o_b'$ satisfies $o_b'\succ_i o_b$ and therefore cannot be the worst case outcome. Finally, $o_w$ has a disapproval vote, so by elimination, we have $W_{\succ_i}(\succ_i',f)\in O_f\backslash \{o_w\}$. Since $o_w$ is voter $i$'s least preferred outcome in $O_f$, we have $W_{\succ_i}(\succ_i',f)\succ_i o_w$, meaning that $\succ'_i$ is a worst case obvious manipulation. \qed
\end{proof}
\section{Computing Obvious Manipulations}

In the previous parts of the paper, we focussed on understanding the conditions under which a voting rule is obviously manipulable. Next, 
we consider the problem of computing an obvious manipulation for a given problem instance. We present algorithmic results for computing obvious manipulations under positional scoring rules.

\begin{prob}{Obvious Manipulation (OM)}
	
Input: Number of voters $n$, set of outcomes $O=\{o_1,o_2,\dots,o_m\}$, preference relation $\succ_i$ of voter $i$, {tie-break order} $\succ_L$ and voting rule $f$.

Problem: Find a preference relation $\succ_i'$ such that\\ $W_{\succ_i}(\succ_i',f)\succ_i W_{\succ_i}(\succ_i,f)$
or 
$B_{\succ_i}(\succ_i',f)\succ_i B_{\succ_i}(\succ_i,f)$.
\end{prob}

If we only consider the best case manipulation, we refer to the problem as {\sc Best-case Obvious Manipulation (BOM)}. If we only consider the worst case manipulation, we refer to the problem as {\sc Worst-case Obvious Manipulation (WOM)}. 

%
%
%
%
%
%

We present algorithms for the obvious manipulation problems. The algorithms are based on reductions to the Constructive Coalitional Unweighted Manipulation (CCUM) that is well-studied in computational social choice~(see e.g., \citep{XZP+09a,ZPR09a}). 
We now introduce the \textsc{Constructive Coalitional Unweighted Manipulation (CCUM)}. 
\begin{prob}{Constructive Coalitional Unweighted Manipulation (CCUM)}
Input: Voting rule $f$, set of outcomes $O$, distinguished candidate $o\in O$, set of voters $S$ that have already cast their votes and set of voters $T$ that have not cast their votes.

Problem: Is there a way to cast the votes in $T$ such that $o$ wins the election under $f$?
\end{prob}

We show that for any voting rule, there is a polynomial-time algorithm for computing a best case obvious manipulation if \textsc{CCUM} can be solved in polynomial time. 

\begin{lemma}\label{lemma:bomalgo}
	For any voting rule, there is a polynomial-time algorithm for \textsc{BOM} if \textsc{CCUM} can be solved in polynomial time. 
	\end{lemma}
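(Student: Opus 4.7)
The plan is to reduce \textsc{BOM} to $O(m)$ oracle calls of \textsc{CCUM}, under the convention (discussed below) that an affirmative \textsc{CCUM} answer is accompanied by a witness profile of ballots. The key observation is that, for any ballot $\succ_i''$ of voter $i$, the outcome $B_{\succ_i}(\succ_i'',f)$ is precisely the $\succ_i$-maximal candidate $o$ for which there exists some $\succ_{-i}$ with $f(\succ_i'',\succ_{-i})=o$. Hence $B_{\succ_i}(\succ_i'',f)$ can be identified by scanning outcomes in decreasing order of $\succ_i$ and, for each candidate $o$, calling \textsc{CCUM} with $S=\{i\}$ having already cast $\succ_i''$, $T=N\setminus\{i\}$, and distinguished candidate $o$; the first affirmative answer pins down $B_{\succ_i}(\succ_i'',f)$.

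The algorithm then runs in two stages. First, apply the subroutine above with $\succ_i''=\succ_i$ to obtain $o^*:=B_{\succ_i}(\succ_i,f)$, using at most $m$ \textsc{CCUM} calls. Second, iterate through the outcomes $o'$ with $o' \succ_i o^*$; for each such $o'$, invoke \textsc{CCUM} with $S=\emptyset$, $T=N$, and distinguished candidate $o'$. If the answer is affirmative for some $o'$, extract voter $i$'s ballot from the witness profile and return it as the manipulation $\succ_i'$: for the accompanying $\succ_{-i}$ we have $f(\succ_i',\succ_{-i})=o'$, so $B_{\succ_i}(\succ_i',f) \succeq_i o' \succ_i o^*$, and $\succ_i'$ is a genuine best-case obvious manipulation.

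For soundness of the negative case: if every \textsc{CCUM} call in the second stage returns ``no'', then for every report $\succ_i'$ and every $\succ_{-i}$ we have $f(\succ_i',\succ_{-i}) \preceq_i o^*$, which implies $B_{\succ_i}(\succ_i',f) \preceq_i o^*$ for all $\succ_i'$; hence no BOM manipulation exists and the algorithm is correct to report so. The total number of \textsc{CCUM} calls is $O(m)$, giving a polynomial-time procedure overall.

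The main obstacle is the gap between the decision and search versions of \textsc{CCUM}: the problem is stated as a decision question, whereas the algorithm needs to recover a concrete witness ballot of voter $i$ on each affirmative call. This gap is bridged by appealing to the standard self-reducibility of \textsc{CCUM} for natural voting rules, or, more directly, by observing that essentially every polynomial-time algorithm for \textsc{CCUM} in the literature (in particular those used for scoring rules, which are our main target via Lemma~\ref{lemma:bomalgo}) explicitly constructs the manipulating coalition's ballots, so the witness profile is available at no extra cost.
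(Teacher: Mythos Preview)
Your proposal is correct and follows essentially the same approach as the paper: compute $B_{\succ_i}(\succ_i,f)$ via \textsc{CCUM} calls with voter $i$'s ballot fixed and the remaining $n-1$ voters as manipulators, then determine whether a strictly better outcome is reachable via \textsc{CCUM} calls with all $n$ voters as manipulators. The paper phrases the second stage as computing the single best reachable outcome $o^*$ and comparing, and it does not explicitly discuss extracting a witness ballot for $i$ (it treats \textsc{BOM} as a decision question), whereas you additionally handle the search aspect via self-reducibility; this is a minor refinement rather than a different route.
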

	\begin{proof}
 Denote $o_b:=B_{\succ_i}(\succ_i,f)$. We can compute $o_b$ as follows. We fix the preference $\succ_i$ of voter $i$ and solve \textsc{CCUM} for each possible outcome while keeping all the other voters as manipulators. This can be checked in $|O|$ calls to an algorithm to solve \textsc{CCUM}.
		Next, we find $i$'s best possible outcome if she is allowed to report any other preference. This can be checked by solving \textsc{CCUM} for each possible outcome while keeping all the voters as manipulators. Let $o^*$ be the possible outcome that is most preferred with respect to $\succ_i$. The instance is best case obviously manipulable if and only if $o^*\succ_i o_b$. \qed
		\end{proof}

We then show that for any positional scoring rule, there is a polynomial-time algorithm for \textsc{OM} if \textsc{CCUM} can be solved in polynomial time.

	\begin{lemma}\label{lemma:algo}
		For any positional scoring rule, there is a polynomial-time algorithm for \textsc{WOM} if \textsc{CCUM} can be solved in polynomial time. 
		\end{lemma}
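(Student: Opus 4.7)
The proof plan is to mirror the strategy of Lemma~\ref{lemma:bomalgo}: reduce \textsc{WOM} to polynomially many calls of \textsc{CCUM}, exploiting the additional structure of positional scoring rules.

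First, I would compute the true worst-case outcome $o_w:=W_{\succ_i}(\succ_i,f)$ exactly as in the proof of Lemma~\ref{lemma:bomalgo}: fix voter $i$'s report to $\succ_i$, treat the other $n-1$ voters as manipulators, and invoke \textsc{CCUM} once for each candidate outcome; let $o_w$ be the $\succ_i$-least-preferred outcome for which \textsc{CCUM} returns YES. Setting $L:=\{o':o'\preceq_i o_w\}$, a report $\succ_i'$ is a worst-case obvious manipulation iff \textsc{CCUM} returns NO for every target $o'\in L$ when voter $i$'s vote is fixed to $\succ_i'$ and the remaining voters are treated as manipulators. Hence the task reduces to searching for a single report that simultaneously renders every outcome in $L$ unforceable.

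Second, I would use the monotonicity inherent to positional scoring rules to restrict attention to a polynomial-size family of canonical reports. The key observation is that moving an outcome $o'\in L$ down in $\succ_i'$ weakly decreases the score voter $i$ contributes to $o'$, which can only make it harder for the other voters to force $o'$; symmetrically, moving an outcome $a\in L^c$ up weakly strengthens it as a competitor to every $o'\in L$. A direct swap argument then shows that if any report is a WOM, one may assume without loss of generality that every outcome of $L^c$ is ranked above every outcome of $L$ in $\succ_i'$.

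Third, to deal with the within-group orderings I would use a bipartite-matching step. Fix a canonical ordering of $L^c$ in the top $|L^c|$ positions (say, following $\succ_i$). For each pair $(o',p)$ with $o'\in L$ and $p\in\{|L^c|+1,\ldots,m\}$, use a single \textsc{CCUM} call to decide whether placing $o'$ at position $p$ (with the other $L$-outcomes filling the remaining bottom positions) leaves $o'$ unforceable. Build the bipartite graph whose edges are the safe pairs and compute a maximum matching; the existence of a perfect matching—together with the monotonicity swap argument—is equivalent to the existence of a WOM, and the matching exhibits the explicit report. Bipartite matching is polynomial, as are the $O(m^2)$ \textsc{CCUM} calls, so the overall algorithm runs in polynomial time.

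The main obstacle is making the structural and matching-based steps fully rigorous. Subtly, whether an edge $(o',p)$ is ``safe'' depends on where the remaining $L$-outcomes end up, which is itself determined by the matching. A careful exchange argument—once again leveraging monotonicity of positional scoring and a uniform handling of the fixed tie-break order—is needed to show that any WOM can be rearranged into a matching certified by the procedure, and that any matching certified by the procedure in turn yields a WOM. This is where the positional-scoring-rule structure is essential and is the heart of the proof.
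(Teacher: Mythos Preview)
Your first two steps—computing $o_w$ via CCUM calls and reducing to reports that place $L^c$ entirely above $L$—coincide with the paper's argument (the paper writes $O_{good}$ and $O_{bad}$ for your $L^c$ and $L$). The divergence is in the third step. Rather than searching over within-group orderings by matching, the paper commits to a \emph{single} canonical report $\succ_i'$: inside $O_{good}$ the outcomes are listed in decreasing tie-break priority, and inside $O_{bad}$ in \emph{increasing} tie-break priority (so the bad outcome with the highest priority sits last). One then makes $|O_{bad}|$ further CCUM calls on this one report and answers YES iff no bad outcome is forceable.

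The rationale is that a bad outcome with high tie-break priority is the hardest to block (it wins ties), so it should receive the least score from voter $i$; dually, a good outcome with high priority is the strongest spoiler and should receive the most. Your matching route is a natural idea, but the dependency problem you flag is real and not easily dissolved: whether placing $o'$ at position $p$ is ``safe'' genuinely depends on the entire assignment, so the bipartite graph is not well-defined without an auxiliary exchange lemma—and that exchange lemma, once proved, already shows that the tie-break-ordered canonical report dominates every other, making the matching superfluous. In short, the idea you are missing is that the within-group orderings can be pinned down by the tie-break order alone, which collapses your search to a single report and sidesteps the dependency issue entirely.
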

	
	\begin{proof}
		
First we compute the worst case outcome $W_{\succ_i}(\succ_i,f)$ of $i$ when she reports the truth.
This is easily computed by running an algorithm that solves \textsc{CCUM} with $i$'s report being fixed, and checking which outcomes are possible.  
		
We check whether $i$ can improve her worst case outcome by misreporting. 
We denote $o_w:=W_{\succ_i}(\succ_i,f)$ and $O_{bad}:=\{o\in O: o_w \succ_i o\} \cup \{o_w\}$.
We also denote $O_{good}:=A\setminus O_{bad}$.
We want to check whether $i$ can ensure that no outcome from $O_{bad}$ is selected irrespective of how the other voters vote. We define a misreport $\succ_i'$ as follows. In $\succ_i'$, the outcomes of $O_{good}$ are preferred over the outcomes of $O_{bad}$. In $O_{good}$ the outcomes are ordered so that higher (tie-break) priority outcomes come earlier. 
In $O_{bad}$ the outcomes are ordered so that higher priority outcomes come later. We solve \textsc{CCUM} with respect to $\succ_i'$ and check whether some outcome in $O_{bad}$ can be selected. If such an alternative cannot be selected, we return yes. Otherwise we return no. 
\qed
\end{proof}
%

Combining the two lemmas above, we get the following.

\begin{theorem}
	For any positional scoring rule, there is a polynomial-time reduction from solving \textsc{OM} to solving \textsc{CCUM}.
	\end{theorem}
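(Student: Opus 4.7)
The plan is to obtain the reduction by simply chaining the two preceding lemmas. Recall that an instance of \textsc{OM} asks whether voter $i$ has some misreport $\succ_i'$ that either strictly improves her best case outcome or strictly improves her worst case outcome (under $\succ_i$). Since these two conditions are disjunctive, it is enough to produce a procedure that separately attempts a best case obvious manipulation and then, if that fails, attempts a worst case obvious manipulation, returning \emph{yes} (together with a witnessing $\succ_i'$) the moment either attempt succeeds.

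First I would invoke Lemma~\ref{lemma:bomalgo}, which for \emph{any} voting rule gives a polynomial-time procedure for \textsc{BOM} using $O(|O|)$ calls to \textsc{CCUM}: it computes $o_b := B_{\succ_i}(\succ_i,f)$ by fixing $\succ_i$ and asking \textsc{CCUM} for each candidate, then computes the globally best achievable $o^\star$ by letting all voters (including $i$) be manipulators, again with $O(|O|)$ calls. If $o^\star \succ_i o_b$, a preference list placing $o^\star$ at the top and the remaining outcomes in some arbitrary order serves as a best case obvious manipulation, and we are done. Note that this step does not require the positional scoring assumption.

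If \textsc{BOM} returns \emph{no}, I would invoke Lemma~\ref{lemma:algo}, which applies precisely to positional scoring rules: it computes $o_w := W_{\succ_i}(\succ_i,f)$ via $O(|O|)$ calls to \textsc{CCUM} with $\succ_i$ fixed, then constructs the canonical misreport $\succ_i'$ that places $O_{good} = \{o : o \succ_i o_w\}$ above $O_{bad} = \{o_w\} \cup \{o : o_w \succ_i o\}$, internally ordered by tie-break priority so as to maximally disadvantage outcomes in $O_{bad}$. A single \textsc{CCUM} call per candidate in $O_{bad}$ checks whether $\succ_i'$ forbids every member of $O_{bad}$ from winning; if so, $\succ_i'$ is a worst case obvious manipulation.

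The total work is a polynomial number of \textsc{CCUM} queries plus polynomial bookkeeping, so composing the two algorithms yields the claimed polynomial-time reduction from \textsc{OM} to \textsc{CCUM}. I do not anticipate any real obstacle here, since both lemmas have already been proved and the two manipulation conditions defining \textsc{OM} are literally the disjunction of the conditions handled by \textsc{BOM} and \textsc{WOM}; the only mild subtlety is noting that the \textsc{WOM} branch is the reason the statement is restricted to positional scoring rules, whereas the \textsc{BOM} branch would work for arbitrary $f$.
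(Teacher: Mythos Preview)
Your proposal is correct and mirrors the paper's own proof, which simply states that the theorem follows by combining Lemma~\ref{lemma:bomalgo} (for \textsc{BOM}) with Lemma~\ref{lemma:algo} (for \textsc{WOM}). One small caveat: the witnessing misreport for \textsc{BOM} should be taken from the profile returned by the \textsc{CCUM} oracle rather than ``$o^\star$ first, rest arbitrary,'' since for scoring vectors with ties at the top an arbitrary completion can coincide (up to approvals) with the truthful ballot and fail to realize $o^\star$; this is a cosmetic fix and does not affect the reduction.
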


{\citet{CoWa16a} discuss the computational complexity of \textsc{CCUM} for various different voting rules.} In particular, \textsc{CCUM} can be solved in polynomial time for the $k$-approval problem. For example, \citet{ZPR09a}  present a greedy polynomial-time algorithm for computing \textsc{CCUM}.
For the sake of completeness, we explicitly write this algorithm for the $k$-approval rule with a fixed tie-break ordering. The algorithm assigns approved outcomes to the manipulators as follows.
First, it assigns the distinguished outcome as each manipulator's first preference. Each manipulator then approves the $k-1$ outcomes with the lowest scores. If there are more than $k-1$ tied outcomes, the ones with the lowest tie-break priority are selected.


\begin{corollary}
		\textsc{OM} can be solved in polynomial time for $k$-approval.
	\end{corollary}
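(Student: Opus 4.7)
The plan is to obtain the corollary as an essentially immediate consequence of the two preceding results: the general polynomial-time reduction from \textsc{OM} to \textsc{CCUM} for any positional scoring rule, together with the fact that \textsc{CCUM} is polynomial-time solvable for $k$-approval. Since $k$-approval is a positional scoring rule with weight vector $(1,\ldots,1,0,\ldots,0)$, both ingredients apply directly.

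Concretely, I would first invoke the theorem establishing that \textsc{OM} reduces to \textsc{CCUM} in polynomial time for any positional scoring rule. Unpacking that reduction, this amounts to $O(|O|)$ calls to a \textsc{CCUM}-oracle in order to compute $B_{\succ_i}(\succ_i,f)$ and the best outcome voter $i$ can attain by misreporting (via Lemma~\ref{lemma:bomalgo}), and a constant number of \textsc{CCUM}-oracle calls to compute $W_{\succ_i}(\succ_i,f)$ and to test whether the specific misreport $\succ_i'$ constructed in Lemma~\ref{lemma:algo} secures the worst case. Thus the whole \textsc{OM} computation is a polynomial-size sequence of \textsc{CCUM} instances on the same input voting rule.

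Next, I would appeal to the known greedy polynomial-time algorithm of \citet{ZPR09a} for \textsc{CCUM} under $k$-approval, recalled explicitly in the paragraph immediately preceding the corollary: set the distinguished candidate as each manipulator's top choice, then have each manipulator approve the $k-1$ outcomes with the lowest current score (breaking ties against the tie-break priority), and finally verify that the distinguished candidate wins. This algorithm runs in time polynomial in $n$ and $m$, and it can be plugged in as the \textsc{CCUM}-oracle invoked by the reduction.

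Combining these two observations, each \textsc{CCUM} call used by the reduction is resolved in polynomial time, and polynomially many such calls are made, so the overall procedure for \textsc{OM} under $k$-approval runs in polynomial time. The only subtlety worth a sentence of care is that the reduction of Lemma~\ref{lemma:algo} is phrased for arbitrary positional scoring rules and relies on a specific constructed misreport $\succ_i'$; I would simply note that this misreport is well-defined for $k$-approval (the ordering within $O_{good}$ and $O_{bad}$ is unambiguous once the tie-break order $\succ_L$ is fixed), so no additional argument beyond combining the two cited results is required.
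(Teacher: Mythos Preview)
Your proposal is correct and follows exactly the paper's approach: the corollary is obtained by combining the polynomial-time reduction from \textsc{OM} to \textsc{CCUM} for positional scoring rules with the known greedy polynomial-time algorithm of \citet{ZPR09a} for \textsc{CCUM} under $k$-approval. One minor inaccuracy: computing $W_{\succ_i}(\succ_i,f)$ also requires $O(|O|)$ \textsc{CCUM} calls (one per candidate outcome), not a constant number, but this does not affect the polynomial-time conclusion.
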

	
\subsection*{Experimental Results}

Since the $k$-approval rule is obviously manipulable and obvious manipulations can be found in polynomial time, we further investigate these manipulations in an experiment.
Below, we experimentally determine the effects of $k$, $m$ and $n$ on the proportion of obviously manipulable voter preferences under the $k$-approval rule. Assuming a fixed tie-break ordering, we generate $1$ million randomly permuted voter preference orderings and determine what proportion of these orderings admit an OM for a given set of parameters. It suffices to simply consider individual preference orderings as the best- and worst-case outcomes (and therefore obvious manipulability) for an agent's preference relation are over all possible preferences of the other agents. Note that from Theorem \ref{BOMWOM}, the set of WOM-admitting preference orderings is the same as the set of OM-admitting preference orderings.
\noindent
\textbf{Effect of $n$:}
Figure \ref{fig1} depicts the results from our experiments determining the effect of the number of voters $n$ on the proportion of obviously manipulable preference orderings. The downwards trend is concurrent with the existing theory that the proportion of individually manipulable voting profiles approaches zero as the number of voters tends to infinity \citep{Pele79a}. A significantly lower proportion of preference orderings admit a BOM than those that admit a WOM.
These trends are consistent for other values of $m$ and $k$, though other figures are omitted due to space restrictions.

\noindent
\textbf{Effect of $m$ and $m-k$:}
In Figure \ref{fig2}, we show heat maps of the proportion of OM-admitting preferences for $m\in\{21,\dots,30\}$ and $m-k$ values for which the preference profile is obviously manipulable. It is more appropriate to consider the number of disapprovals $m-k$ than the number of approvals $k$, as the impact of $k$ is relative to its difference from the number of outcomes. For example, it is better to compare $m=21, k=20$ with $m=30, k=29$ than with $m=30, k=20$. For a fixed number of disapprovals, the proportion of OM-admitting preferences increases with the number of outcomes. This is likely because a lower proportion of the outcomes can be `blocked' by the other voters under the worst case outcome. The proportion increases steadily then rapidly decreases as the number of disapprovals increases, suggesting that an intermediary number of disapprovals increases individual manipulative power in comparison to the manipulative coalition of the other voters.

\begin{center}
\pgfplotsset{width=0.6\linewidth,compat=1.9}
\begin{figure}[h!]
	\centering

       \begin{tikzpicture}
\begin{axis}[
    xlabel={Number of voters $n$},
    ylabel={Proportion of preferences},
    xmin=3, xmax=14,
    ymin=0, ymax=0.8,
    xtick={3,4,5,6,7,8,9,10,11,12,13,14},
    ytick={0,0.2,0.4,0.6,0.8},
    legend pos=north east,
    ymajorgrids=true,
    grid style=dashed,
]

\addplot[
    color=blue,
    mark=square,
    ]
    coordinates {
    (3,0.55)(4,0.53)(5,0.5)(6,0.46)(7,0.41)(8,0.36)(9,0.3)(10,0.24)(11,0.18)(12,0.12)(13,0.06)(14,0)
    };
    \legend{Proportion admitting WOM, Proportion admitting BOM}
    
    \addplot[
    color=red,
    mark=square,
    ]
    coordinates {
    (3,0.18)(4,0.13)(5,0.1)(6,0.08)(7,0.06)(8,0.04)(9,0.03)(10,0.024)(11,0.017)(12,0.01)(13,0.005)(14,0)
    };
    
\end{axis}
\end{tikzpicture}
%
    \caption{Effect of $n$ on proportion of preferences that admit WOM and BOM ($k=14, m=15$)}\label{fig1}
\end{figure}
\end{center}

\newcommand\items{10}   
\newcommand{\STAB}[1]{\begin{tabular}{@{}c@{}}#1\end{tabular}}
\arrayrulecolor{white} 

\begin{figure}[h!]
\centering
\scalebox{0.91}{
 \noindent\begin{tabular}{cc*{\items}{|E}|}

 \multicolumn{11}{c}{Number of outcomes $m$}\\
 \multirow{10}{*}{\STAB{\rotatebox[origin=c]{90}{No. of disapprovals $m-k$}}} &
 \multicolumn{1}{c}{} &
 \multicolumn{1}{c}{21} &
 \multicolumn{1}{c}{22} &
 \multicolumn{1}{c}{23} &
 \multicolumn{1}{c}{24} &
 \multicolumn{1}{c}{25} &
 \multicolumn{1}{c}{26} &
 \multicolumn{1}{c}{27} &
 \multicolumn{1}{c}{28} &
 \multicolumn{1}{c}{29} &
 \multicolumn{1}{c}{30} \\  \hhline{~*\items{|-}|}

 & 1  & 0.61   & 0.61  & 0.62 & 0.62 & 0.63 & 0.63 & 0.64 & 0.64 & 0.65 & 0.65 \\ \hhline{~*\items{|-}|}
 & 2  & 0.80   & 0.81  & 0.82  & 0.83 & 0.84 & 0.84 & 0.85 & 0.85 & 0.86 & 0.86  \\ \hhline{~*\items{|-}|}
 & 3  & 0.85   & 0.87   & 0.88 & 0.89 & 0.90 & 0.90 & 0.91 & 0.92 & 0.92 & 0.93 \\ \hhline{~*\items{|-}|}
 & 4  & 0.83   & 0.86   & 0.88 & 0.89 & 0.91 & 0.92 & 0.93 & 0.94 & 0.94 & 0.95 \\ \hhline{~*\items{|-}|}
 & 5  & 0.74   & 0.80   & 0.83 & 0.86 & 0.89 & 0.90 & 0.92 & 0.93 & 0.94 & 0.95 \\ \hhline{~*\items{|-}|}
 & 6  & 0.47   & 0.60   & 0.70 & 0.77 & 0.82 & 0.85 & 0.88 & 0.90 & 0.92 & 0.93 \\ \hhline{~*\items{|-}|}
 & 7  & 0   & 0   & 0.28 & 0.48 & 0.61 & 0.71 & 0.78 & 0.83 & 0.87 & 0.89 \\ \hhline{~*\items{|-}|}
 & 8  & 0   & 0   & 0 & 0 & 0 & 0.29 & 0.49 & 0.62 & 0.72 & 0.79 \\ \hhline{~*\items{|-}|}
 & 9  & 0   & 0   & 0 & 0 & 0 & 0 & 0 & 0 & 0.29 & 0.50 \\ \hhline{~*\items{|-}|}
 \end{tabular}
}
%
\caption{Effect of $m$ and $m-k$ on proportion of OM-admitting preferences  ($n=3$)}\label{fig2}
\end{figure}
\section{Conclusion}
In this paper, we initiated research on the obvious manipulability of voting rules. One of our key insights is that certain rules are obviously manipulable when the number of outcomes is relatively large as compared to the number of voters. The $k$-approval rule is an example of such a rule, and we have also shown that under the rule, an obvious manipulation can be computed in polynomial time. Despite all unanimous, non-dictatorial voting rules being manipulable for $n\geq 3$, most commonly used rules are NOM, suggesting that NOM is a significantly weaker notion than strategyproofness. We remark that in the positional scoring rules that we have classified as OM, the obvious manipulations are dependent on a fixed, deterministic tiebreak ordering which is standard in the voting literature.
To gain further insights into which voting rules are more manipulable than others, a Bayesian approach could be used, in which voters have prior beliefs on the distribution of other votes. This approach lies between the perfect information of strategyproofness and the lack of information in NOM.
As a new concept, NOM has currently been examined only for a handful of settings. It will be interesting to consider it when analyzing the strategic behaviour of agents in other settings such as fair division (see, e.g., \citep{Orte19a}).

\noindent
\textbf{Acknowledgements.} The authors thanks Anton Baychkov, Barton Lee and the anonymous reviewers of ADT 2021 for useful feedback.
\vspace{-1em}
\bibliographystyle{splncs04nat}
 \newpage
 
 \normalsize
\section*{Appendix}

\paragraph{Proof of Lemma~\ref{prop:noveto}}

	\begin{proof}
		
We show that if a voting rule $f$ does not admit a non-dictatorial {voter with veto power}, then is not obviously manipulable.
Consider a voting rule $f$ that  {does not admit a voter with veto power}. The admission of a dictatorial voter implies that the rule is strategyproof, so it suffices to consider the case when $f$ is not dictatorial.

First note that voter $i$'s best outcome under truthful report $\succ_i$ is her most preferred outcome $o$ in the set of possible outcomes under $f$. Such an outcome is achievable because $o$ is a possible outcome and no voter has veto power, hence the outcome must be achievable under some $\succ_{-i}$. Therefore her best outcome under any untruthful report $\succ_i'$ cannot be strictly better than under a truthful report.

When $i$ reports $\succ_i'$, her worst possible outcome with respect to her preference $\succ_i$ is her least preferred outcome from the set of possible outcomes. Such an outcome is achievable because $f$ does not admit a {voter with veto power}.
We therefore have $W_{\succ_i}(\succ_i,f)\pref_i W_{\succ_i}(\succ_i',f)$ for all untruthful ballots $\succ_i'$. Therefore, $f$ is NOM. \qed
\end{proof}
\paragraph{Proof of Corollary~\ref{corol:NOM}}
			\begin{proof}
				Any majoritarian rule is almost-unanimous. Hence, the statement follows from Theorem~\ref{prop:almostunan}. \qed
				
				\end{proof}
				
\paragraph{Proof of Theorem~\ref{theo:kapp}}
\begin{proof}
The statement follows from Lemma \ref{k1} and Lemma \ref{k2}. \qed
\end{proof}

\paragraph{Proof of Corollary~\ref{corol:NOMM}}
\begin{proof}
	Note that for plurality, $k=1$. Hence, $m\leq n(m-k)+1$ holds.
\qed\end{proof}		
\paragraph{Proof of Lemma~\ref{lemma:spsr}}

\begin{proof}
Consider the scoring rule $w=(m+2,m+1,\ldots,4,0)$. Suppose we have $N=\{i,j,k\}$, $m=4$, $w=(6,5,4,0)$, tie-break order $o_1\succ_Lo_2\succ_Lo_3\succ_Lo_4$ and that voter $i$'s truthful ballot is:
\[\succ_i: o_1\succ_i o_2\succ_i o_3\succ_i o_4.\]
We first show that this scoring rule can admit a voter with veto power. Here, voter $i$ attempts to veto outcome $o_4$ by voting it last. We show that it is impossible for the other voters to vote such that outcome $o_4$ is chosen. Clearly, $j$ and $k$ must vote outcome $o_4$ as first preference, {giving it a score of 12}. Now outcome $o_1$ must have a strictly lower score than $o_4$, so we have three possible cases.\\
\textbf{Case 1:}
Outcome $o_1$ is set as the second preference of one voter and the last preference of the remaining voter.
\[\succ_j: o_4\succ_jo_1\succ_j\cdot\succ_j\cdot\]
\[\succ_k: o_4\succ_k \cdot\succ_k\cdot\succ_k o_1\]
Outcome $o_2$ must be ranked as $j$'s last preference, else it will have a strictly higher score than $o_4$. With $o_3$ as $j$'s third preference, it must have at least 12 points, and wins the tie-break with $o_4$. Therefore in this case there does not exist a voting profile that chooses outcome $o_4$.\\
\textbf{Case 2:}
Outcome $o_1$ is set as the third preference of one voter and the last preference of the remaining voter.
\[\succ_j: o_4\succ_j\cdot\succ_jo_1\succ_j\cdot\]
\[\succ_k: o_4\succ_k \cdot\succ_k\cdot\succ_k o_1\]
If outcome $o_2$ is ranked as the second preference of $j$ and either the second or third preference of $k$, it will have a strictly higher score than $o_4$. The same can be said for outcome $o_3$. Hence by exhaustion there does not exist a voting profile that chooses outcome $o_4$.\\
\textbf{Case 3:}
Outcome $o_1$ is set as the last preference of both remaining voters.
\[\succ_j: o_4\succ_j\cdot\succ_j\cdot\succ_j o_1\]
\[\succ_k: o_4\succ_k \cdot\succ_k\cdot\succ_k o_1\]
The lowest score that $o_2$ can achieve is 13, by setting it as the third preference of both $j$ and $k$. Therefore $o_4$ cannot be chosen.

By exhaustion of cases, we show that there does not exist a voting profile $\succ_{-i}$ that chooses outcome $o_4$, so it has been effectively vetoed by agent $i$.\\
\\
We now show that this voting rule is obviously manipulable. Suppose we have the preference profile 
\[\succ_i: o_1\succ_i o_2\succ_i o_4\succ_i o_3,\]
\[\succ_j: o_3\succ_jo_1\succ_jo_4\succ_jo_2,\]
\[\succ_k: o_3\succ_k o_2\succ_ko_4\succ_k o_1,\]
$W_{\succ_i}(\succ_i,f)=o_3$ is achievable as it wins the tie-break with outcome $o_4$. Now suppose that voter $i$ instead reports the manipulation
\[\succ'_i: o_1\succ_i o_4\succ_i o_2\succ_i o_3.\]
By a similar argument as above, it can be shown that it is impossible for the other voters to vote such that outcome $o_3$ is chosen. Therefore, $W_{\succ_i}(\succ'_i,f)\succ_iW_{\succ_i}(\succ_i,f)$, concluding our proof.\qed
\end{proof}


\paragraph{Proof of Lemma~\ref{yuck}}
\begin{proof}
Consider the scoring rule $w=(\omega+m\epsilon,\omega+(m-1)\epsilon,\ldots,\omega+2\epsilon,0)$, where $\omega,\epsilon>0$ and $\omega>m\epsilon(n-1)$. Suppose without loss of generality that voter $i$'s preferences are:
\[\succ_i:o_1\succ_io_2\succ_i\ldots\succ_io_m.\]
Since the voting rule is strict, the best case outcome is trivial.\\
\textbf{Case 1} ($m>n$):\\
We will show that this scoring rule admits a voter with veto power when the number of outcomes is greater than the number of voters. The highest score that outcome $o_m$ can receive is $(n-1)(\omega+m\epsilon)$. By the pigeonhole principle, there must exist at least one outcome which is not voted last preference by any voter. Each of these outcomes always has a score greater than $n\omega$, which is greater than $(n-1)(\omega+m\epsilon)$. It can therefore be seen that $o_m$ has been vetoed by voter $i$. Now suppose that the other $n-1$ voters report the following ballots:
\[o_{m-1}\succ\ldots\succ O\succ o_1\]
\[o_{m-1}\succ\ldots\succ O\succ o_2\]
\[\ldots\]
\[o_{m-1}\succ\ldots\succ O\succ o_{n-1},\]
where
\[O=\begin{cases}
o_m & m=n+1\\
o_{m-2} & m=n+2\\
o_{m-2}\succ\ldots\succ o_n & m>n+2.
\end{cases}\]
If $m=n+1$, then outcome $o_{m-1}$ trivially has the highest score and is chosen. We now consider $m>n+1$. Clearly, the two outcomes with the highest scores are $o_{m-1}$ and $o_{m-2}$. We now show that the score of $o_{m-1}$ is always greater than the score of $o_{m-2}$.
\begin{align*}
\mathrm{Score}(o_{m-1})&>\mathrm{Score}(o_{m-2})\\
\iff (\omega+2\epsilon)+(n-1)(\omega+m\epsilon)&>\omega+3\epsilon+(n-1)(\omega+(m-n-1)\epsilon)\\
\iff n\omega+\epsilon(2+mn-m)&>n\omega+\epsilon(4+mn-n^2-m)\\
\iff n^2&>4
\end{align*}
which always holds as we assume $n\geq 3$. Therefore, voter $i$'s worst case outcome under a truthful report is $W_{\succ_i}(\succ_i,w)=o_{m-1}$. Now any manipulation where $o_{m-1}$ increases in preference gives it a strictly higher score, so it remains chosen. Any manipulation where $o_{m-1}$ remains the same preference ranking can be reversed by the other voters making the same outcome changes in their ballots. Finally, any manipulation where $o_{m-1}$ moves to a lower preference ranking results in $o_m$ being unvetoed, making it the worst case outcome. Thus, this scoring rule is NOM when $m>n$.\\
\textbf{Case 2} ($m\leq n$):\\
Now suppose the other $n-1$ voters vote $o_m$ as their first preference, and that they vote such that every outcome is voted last preference by at least one voter. Excluding $o_m$, the outcome with the highest possible score is $o_1$, when it is voted second preference by $n-2$ of the other voters and last preference by 1 of the other voters. We now show that under this scenario, $o_m$ has a higher score than $o_1$.
\begin{align*}
\mathrm{Score}(o_m)&>\mathrm{Score}(o_1)\\
\iff (n-1)(\omega+m\epsilon)&>(\omega+m\epsilon)+(n-2)(\omega+(m-1)\epsilon)\\
\iff \epsilon (mn-n)&>\epsilon(mn-m-n+2)\\
\iff n&>2
\end{align*}
which always holds as we assume $n\geq 3$. $o_m$ is therefore voter $i$'s worst case outcome. Using the same arguments as in Case 1, we deduce that $o_m$ is also the worst case outcome under any manipulation by voter $i$. This scoring rule is therefore NOM under $m\leq n$. \qed
\end{proof}


\paragraph{Proof of Theorem~\ref{th:sPOS}}
\begin{proof}
Suppose we have weight vector $w=(s_1,s_2,\ldots,s_m)$, where $s_i-s_{i+1}\geq s_{i+1}-s_{i+2}$ for all $i\in\{1,2,\ldots,m-2\}$. The rule is strict, so it is not best case obviously manipulable because $i$'s most preferred outcome gets selected if it is reported to be in the first position by all the voters. 

Next, we show that that rule is not worst case obviously manipulable. We do so by showing that whenever the voter $i$ misreports, there exists a profile of the other voters under which $i$'s least preferred outcome is selected. 

Consider the scenario where voter $i$'s truthful ballot is
\[\succ_i: o_1\succ_io_2\succ_i\ldots\succ_io_m,\]
and every other voter reports the reverse preference order
\[\succ_{-i}:o_m\succ_{-i}o_{m-1}\succ_{-i}\ldots\succ_{-i}o_2\succ_{-i}o_1.\]
Alternative $o_m$ has a score of $s_m+(n-1)s_1$, and for $i\in \{1,2,\ldots,m-1\}$, $o_i$ has a score of $s_i+(n-1)s_{m+1-i}$. We show that $s_m+(n-1)s_1>s_i+(n-1)s_{m+1-i}$ for $i\in \{1,2,\ldots,m-1\}$.
\begin{align*}
&s_m+(n-1)s_1>s_i+(n-1)s_{m+1-i}\\
\iff&(n-1)(s_1-s_{m+1-i})>s_i-s_m
\end{align*}
Since $n\geq 3$, the inequality holds for $i=1$. It suffices to show that $s_1-s_{m+1-i}\geq s_i-s_m$ for all $i\in \{2,3,\ldots,m-1\}$. Now due to weakly diminishing differences, the following inequalities hold:
\begin{align*}
s_1-s_2&\geq s_i-s_{i+1}\\
s_2-s_3&\geq s_{i+1}-s_{i+2}\\
&\cdots\\
s_{m-1-i}-s_{m-i}&\geq s_{m-2}-s_{m-1}\\
s_{m-i}-s_{m+1-i}&\geq s_{m-1}-s_m.
\end{align*}
If we sum up these inequalities, we have $s_1-s_{m+1-i}\geq s_i-s_m$, so therefore $s_m+(n-1)s_1>s_i+(n-1)s_{m+1-i}$ for $i\in \{1,2,\ldots,m-1\}$, meaning outcome $o_m$ has a strictly higher score than every other outcome. A manipulation by $i$ where $o_m$ is not her last preference results in the outcome having a strictly higher score and remaining chosen by every other agent reporting the reverse preference order and swapping $o_m$ to become their first preference. If $i$ reports the manipulation $\succ'_i$ such that $o_m$ is still her last preference, it can be shown by the same argument that $o_m$ remains her worst case outcome, achieved by the other voters reporting the reverse preference order. Therefore the rule is not worst case obviously manipulable. \qed
\end{proof}

\paragraph{Proof of Lemma~\ref{BOM}}
\begin{proof}
Suppose voter $i$ has preferences
\[\succ_i: o_1\succ_i o_2\succ_i\ldots\succ_io_m.\]
Consider the instance where every voter reports the same preferences as $i$. If the first and second elements of the scoring vector are different, then $B_{\succ_i}(\succ_i,f)=o_1$ is achieved, which cannot be manipulated further. Therefore, best case obvious manipulability requires the first $k$ elements of the scoring vector to be the same for some $k>1$.

Now suppose the first $k$ elements of the scoring vector are equal for some $k>1$. First, we show that if $k-1\leq (n-1)(m-k)$, then $f$ is not best case obviously manipulable.

\noindent \textbf{Case 1 ($k-1\leq (n-1)(m-k)$):}
We show by construction that if this condition is met, there always exists a profile where $i$ reports truthfully and $o_1$ is the unique outcome with the highest score. Under the profile where every voter reports the same preferences as $i$, outcomes $\{o_1,\dots,o_k\}$ have a tied score. A voter can ensure an outcome of $\{o_2,\dots,o_k\}$ has a strictly lower score than $o_1$ by `swapping' its preference relation positioning with one of its $m-k$ least preferred outcomes. For example, the preference relation $o_1\succ o_2\succ \dots \succ o_m$ changes to $o_1\succ o_m \succ o_3 \succ \dots \succ o_{m-1}\succ o_2$. There are $n-1$ voters that can perform $m-k$ `swaps', so if $k-1\leq (n-1)(m-k)$, there are sufficient swaps to ensure each outcome in $\{o_2,\dots,o_k\}$ has a strictly lower score than $o_1$. Therefore, there always exists a profile where $i$ reports truthfully and $o_1$ is the unique outcome with the highest score. A best case outcome of $B_{\succ_i}(\succ_i,f)=o_1$ cannot be improved, so if $k-1\leq (n-1)(m-k)$, then $f$ is not best case obviously manipulable.

\noindent \textbf{Case 2 ($k-1> (n-1)(m-k)$):}
We show that if this condition is met, there exists a tie-breaking ordering and manipulation $\succ_i'$ where $B_{\succ_i}(\succ'_i,f) \succ_i B_{\succ_i}(\succ_i,f)$. Consider the same scenario as in the previous case. If $k-1> (n-1)(m-k)$, then there are insufficient swaps to ensure that each outcome in $\{o_2,\dots,o_k\}$ has a strictly lower score than $o_1$. Consequently, under any profile where $i$ reports truthfully and all voters vote $o_1$ as their first preference, there must be at least one outcome in $\{o_2,\dots,o_k\}$ that is tied with $o_1$. Denote $c=(k-1)-(n-1)(m-k)$ as the minimum number of outcomes in $\{o_2,\dots,o_k\}$ that must be tied with $o_1$ and suppose that the tie-breaking order is \[\succ_L: o_k\succ_Lo_{k-1}\succ_L\ldots\succ_Lo_2\succ_Lo_1\succ_Lo_{k+1}\succ_L\dots\succ_L o_m.\]
By iteratively selecting $i$'s $c$ most preferred outcomes in $\{o_2,\dots,o_{k}\}$ to be the set of outcomes tied with $o_1$, we deduce that $B_{\succ_i}(\succ_i,f)=o_{c+1}$, as it is the tiebreak winner out of the set of tied outcomes $\{o_1,\dots,o_{c+1}\}$. Now suppose that voter $i$ instead reports the manipulation where $o_{c+1}$ has been swapped with $o_m$
\[\succ_i': o_1\succ_i\dots\succ_io_m\succ_i\dots\succ_io_k\dots\succ_io_{c+1}.\]
The other voters can now report such that the set of outcomes $\{o_2,\dots,o_c,o_m\}$ is tied with $o_1$. We now have $B_{\succ_i}(\succ'_i,f)=o_c$ and therefore $B_{\succ_i}(\succ'_i,f) \succ_i B_{\succ_i}(\succ_i,f)$.

Therefore a positional scoring rule $f$ where the first $k$ elements of the scoring vector are equal for some $k>1$ is best case obviously manipulable if and only if $k-1> (n-1)(m-k)$ (or $k-2\geq(n-1)(m-k)$), which can be rearranged to form $n \leq \frac{m-2}{m-k}$. The lemma's statement follows from the fact that best case obvious manipulability requires the first $k$ elements of the scoring vector to be the same for some $k>1$. \qed
\end{proof}
\begin{theorem}
 A positional scoring rule is NOM if the tiebreak is randomized and each tied outcome has a positive probability of winning the tiebreak.
 \end{theorem}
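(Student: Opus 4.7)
The plan is to obtain a clean position-based description of the set of possible winners and use it to verify both NOM conditions directly. Denote by $A(\succ_i')$ the set of outcomes $o$ such that $o$ ties for the maximum score under $(\succ_i', \succ_{-i}')$ for some $\succ_{-i}'$; the hypothesis that every tied outcome wins with positive probability means that $A(\succ_i')$ is precisely the set of possible winners when voter $i$ reports $\succ_i'$. A positional scoring rule together with such a randomized tiebreak is neutral, so $A(\sigma(\succ_i'))=\sigma(A(\succ_i'))$ for every permutation $\sigma$ of outcomes. Taking $\sigma$ to be the permutation sending the outcome at position $j$ of $\succ_i$ to the outcome at position $j$ of $\succ_i'$, this implies that there is a fixed set $R \subseteq \{1,\ldots,m\}$ (depending only on $f$, $n$, and $m$) such that, in any report, the outcomes in $A$ are precisely those occupying positions in $R$.

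I then show $R$ is downward closed via a swap argument. Suppose $j \in R$ is witnessed by a profile $(\succ_i', \succ_{-i}')$ in which $a_j$, the outcome at position $j$ of $\succ_i'$, ties for the maximum, and fix any $j' < j$. Let $a_{j'}$ be the outcome at position $j'$ of $\succ_i'$. Form $\succ_{-i}''$ by transposing $a_{j'}$ and $a_j$ in every other voter's ranking, leaving $\succ_i'$ untouched. This swap preserves every other outcome's score and exchanges the other voters' aggregate contributions to $a_{j'}$ and $a_j$. Combining $s_{j'} \geq s_j$ with the fact that $a_j$ previously attained the maximum (in particular weakly exceeding $a_{j'}$), a short computation shows that $a_{j'}$ now attains the maximum under $(\succ_i', \succ_{-i}'')$, so $j' \in R$. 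Consequently $R = \{1,\ldots,k\}$ for some $k$.

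Finally, I apply this structure. Under truthful $\succ_i$, $A(\succ_i) = \{o_1,\ldots,o_k\}$, giving $B_{\succ_i}(\succ_i, f) = o_1$ and $W_{\succ_i}(\succ_i, f) = o_k$. For any report $\succ_i'$, $A(\succ_i')$ consists of $k$ distinct outcomes. Since $o_1$ is the $\succ_i$-maximum over all outcomes, $B_{\succ_i}(\succ_i, f) \pref_i B_{\succ_i}(\succ_i', f)$ is automatic. For the worst case, the $k$ outcomes of $A(\succ_i')$ occupy $k$ distinct positions of $\succ_i$, so the $\succ_i$-least preferred among them sits at position at least $k$ in $\succ_i$; therefore $W_{\succ_i}(\succ_i, f) = o_k \pref_i W_{\succ_i}(\succ_i', f)$.

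The main obstacle is the swap argument in the middle step: after the transposition I need to verify that $a_{j'}$'s new score simultaneously weakly dominates both $a_j$'s reduced new score and the unchanged scores of all other outcomes, which requires combining $s_{j'} \geq s_j$ with the maximality relations inherited from the original witness profile. The remaining steps are routine, using only neutrality and a pigeonhole observation about positions.
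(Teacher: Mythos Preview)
Your argument is correct and takes a genuinely different route from the paper's proof. The paper fixes the truthful worst-case outcome $o_w$ together with a witnessing profile $\succ_{-i}$, and then handles three cases depending on whether $o_w$ occupies the same, a higher, or a lower position in the misreport $\succ_i'$; in each case it constructs, via a permutation $\sigma$ with $\sigma(\succ_i)=\succ_i'$ applied to the other voters' ballots (plus one additional swap when needed), a profile under which $o_w$ or something worse is still achievable. Your proof instead extracts a uniform structural lemma: by neutrality the set of achievable winners depends only on positions, and the swap argument shows the relevant position set is an initial segment $\{1,\dots,k\}$. The NOM conditions then drop out by pigeonhole with no case analysis. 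What your approach buys is a clean characterization of exactly which outcomes are achievable from any report (the top $k$ reported outcomes), which makes both the best-case and worst-case inequalities immediate; the paper's approach is more hands-on and constructive but yields less structural information. Both are valid; yours is arguably more elegant.
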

 \begin{proof}
 Suppose there are $n$ voters, $m$ outcomes, the weight vector of the scoring rule is $w=(s_1,s_2,\dots,s_m)$, and without loss of generality that voter $i$'s true preferences are $\succ_i: o_1\succ_i o_2\succ_i \dots \succ_i o_m$.

 We first show that a positional scoring rule $f$ is NOM in the best case if the condition is met. Recall that our definition of a positional scoring rule ensures that $s_1\geq s_2 \geq \dots \geq o_m$. Suppose every other voter votes the same ballot as voter $i$. If $s_1>s_2$, $B_{\succ_i}(\succ_i,f)=o_1$ and cannot be improved by any manipulation. Now if $s_1=s_2=\dots=s_k$ for some $k\in \{2,\dots,m-1\}$, then outcomes $o_1,\dots,o_k$ are tied with the highest score. By the tiebreak condition, $o_1$ has a positive probability of winning the tiebreak, so we have $B_{\succ_i}(\succ_i,f)=o_1$ which cannot be improved by any manipulation. The positional scoring rule is therefore NOM in the best case.

 We now show that it is NOM in the worst case if the condition is met. Denote $o_w=W_{\succ_i}(\succ_i,f)$ and let $\succ_{-i}$ be the preference profile such that $f(\succ_i,\succ_{-i})=o_w$. Under the preference profile $(\succ_i,\succ_{-i})$ and scoring rule $f$, the expression $Score(o_w)\geq \max_{j\in \{1,\dots,m\}}Score(o_j)$ holds. Note that under this scenario, $o_w$ can have the same score as any other outcome due to the tiebreak condition.

 Now suppose voter $i$ misreports $\succ_i'$ where $o_w$ is in the same relative position as in $\succ_i$. Define $\sigma$ as the function that relabels/permutes the outcomes such that $\sigma(\succ_i)=\succ_i'$. If the other voters apply the same relabelling/permutation to their own ballots and vote preference profile $\succ_{-i}'=(\sigma(\succ_1), \dots, \sigma(\succ_{i-1}), \sigma(\succ_{i+1}), \dots, \sigma(\succ_n))$, then we still have $Score(o_w)\geq \max_{j\in \{1,\dots,m\}}Score(o_j)$ and hence $o_w$ remains as voter $i$'s worst case outcome.

 Consider the case where voter $i$ misreports $\succ_i^*$ where $o_w$ is more preferred relative to $\succ_i$. We construct $\succ_i'$ from $\succ_i^*$ by swapping $o_w$ with the outcome in its original position in $\succ_i$, and let $\sigma$ be such that $\sigma(\succ_i)=\succ_i'$. As previously shown, if the other voters vote $\succ_{-i}'$, then under $(\succ_i',\succ_{-i}')$, $Score(o_w)\geq \max_{j\in \{1,\dots,m\}}Score(o_j)$. If we convert $(\succ_i',\succ_{-i}')$ to $(\succ_i^*, \succ_{-i}')$ by making only one swap where $o_w$ moves to a higher position, the score of $o_w$ does not decrease, and no other outcome's score increases, so the inequality is retained and $o_w$ is still able to be chosen under $\succ_i^*$.

 Finally, we consider the case where voter $i$ misreports $\succ_i^\dagger$ where $o_w$ is less preferred relative to $\succ_i$. Similarly to the previous case, we construct $\succ_i'$ from $\succ_i^\dagger$ by swapping $o_w$ such that it is in the same relative position as in $\succ_i$, and define $\sigma$ such that $\sigma(\succ_i)=\succ_i'$. Under $(\succ_i',\succ_{-i}')$, we have $Score(o_w)\geq \max_{j\in \{1,\dots,m\}}Score(o_j)$. If we convert $(\succ_i',\succ_{-i}')$ to $(\succ_i^\dagger, \succ_{-i}')$ by making only one swap where $o_w$ moves to a lower position, then the score of $o_w$ does not increase, and the score of the other swapped outcome does not decrease. Under this profile, either $o_w$ or the other swapped outcome (which voter $i$ prefers less than $o_w$) is chosen by $f$, and hence voter $i$ does not improve its worst case outcome.

 By exhaustion of cases, we have shown that under the specified randomized tiebreak condition, any positional scoring rule is NOM in the worst case. The theorem statement follows. \qed
 \end{proof}

\end{document}